\documentclass[preprint,11pt,authoryear]{elsarticle}

\usepackage{amsmath,amssymb,amsthm}
\usepackage{xcolor}

\usepackage{booktabs}
\usepackage{multirow}

\usepackage{url}

\usepackage{times}

\def\iid{\buildrel {\rm i.i.d.} \over \sim}
\def\eqd{\buildrel {\rm d} \over =}

\def\Real{\mathbb{R}}
\def\P#1{{\mathbb{P}}\left(#1\right)}
\def\E#1{{\mathbb E}\left[#1\right]}
\def\Var#1{{\mathrm Var}\left(#1\right)}

\newtheorem{example}{Example}
\newtheorem{model}{Model}
\newtheorem{definition}{Definition}
\newtheorem{proposition}{Proposition}
\newtheorem{theorem}{Theorem}

\newtheorem{lemma}{Lemma}
\newtheorem{remark}{Remark}

\renewcommand{\emph}[1]{\textbf{#1}}

\begin{document}
\begin{frontmatter}
\title{Can a small additional claim lower the premium? Credibility orders for collective risk models}
\author[EH]{Jihyun Park \corref{cor1}}

\author[EH]{Jieun Kim \corref{cor1}}

\author[UR]{Taehan Bae \corref{cor1}}

\author[EH]{Jae Youn Ahn \corref{cor2}}
\ead{jaeyahn@ewha.ac.kr}

\address[EH]{Department of Statistics, Ewha Womans University, Seoul, Republic of Korea}

\address[UR]{Department of Mathematics, University of Regina, Regina, Canada}

\cortext[cor1]{First Authors}
\cortext[cor2]{Corresponding Author}

\begin{abstract}
The collective risk model is a fundamental framework in insurance ratemaking for modeling aggregate losses by combining claim frequency and claim severity components. 
A key structural requirement for a reliable experience rating system is a monotone ordering property: policyholders with worse past experience should receive a stochastically larger prediction for future losses, and hence a higher premium. 
Such credibility-type monotonicity is well documented in the insurance literature for univariate outcomes under classical random-effect models. 
However, extending this principle to the collective risk model is nontrivial because the relevant history is inherently multivariate, involving both claim counts and individual claim amounts. Hence, despite its practical importance, a corresponding credibility-type ordering has not been systematically developed for predictive distributions of aggregate loss.
In this paper, we provide an example showing that standard collective risk model specifications can violate monotonicity: adding an additional but sufficiently small claim may decrease the premium, thereby creating perverse incentives for strategic reporting and potentially undermining the integrity of experience rating.
Motivated by this pathology in view of insurance, we formalize a credibility order tailored to collective risk models and derive tractable sufficient conditions under which the predictive distribution of aggregate loss is monotone in past experience, ruling out such pathological violations.
Numerical studies and an empirical illustration using real insurance data accompany our theoretical results.
\end{abstract}

\begin{keyword}
Collective risk model, stochastic order, random effect model, credibility order

JEL Classification: C300
\end{keyword}

\end{frontmatter}

\vfill

\pagebreak

\vfill

\pagebreak

\section{Introduction}

A basic requirement of a well-functioning experience rating system is
\emph{stochastic monotonicity} of the predictive distribution, which plays a central role in risk assessment and pricing models in the area of economics, operations research, insurance as well as statistics \citep{Lehmann1988, milgrom1994monotone, athey2018value, agoston2026stationary}; specifically, as a policyholder reports more claims (or larger claims), the insurer's
assessment of that policyholder's future risk should not become smaller.
This principle is more than a modeling preference, and it is a structural requirement for a reliable
ratemaking system, because it ensures that premiums respond in the correct direction as the
observed claim history deteriorates.
In the classical random-effect framework for univariate outcomes, such monotonicity can be
formalized using stochastic orders of posterior, and has been
extensively developed under the name of 
the \emph{conditionally increasing in sequence} in the actuarial literature; see, for example,
\citet{purcaru2003dependence, denuit2006actuarial, agoston2026stationary, ahn2026zero}.
Following \citet{ahn2026zero}, we refer to this stochastic-order formulation  as a \emph{credibility order} to emphasize the
insurance perspective.

Many insurance products, however, are most naturally priced not through a single univariate
outcome but through the \emph{aggregate loss} over a period.
This leads to the \emph{collective risk model} (CRM), a workhorse framework in which claim
frequency and claim severity are modeled separately and then aggregated. 
Concretely, letting $N_t$ be the number of claims reported in period $t$ and
\begin{equation}\label{eq.r.1}
\boldsymbol{Y}_{t}:=
\begin{cases}
(Y_{t, 1}, \cdots, Y_{t, N_t}), & N_t>0;\\
\emptyset, & N_t=0.
\end{cases}
\end{equation}
the corresponding severities, the aggregate loss is
\[
S_t := \sum_{j=1}^{N_t} Y_{t,j}
\]
with the convention $S_t=0$ when $N_t=0$.
For ratemaking purposes, collective risk models are often embedded in (dynamic) random-effect
structures to capture persistent heterogeneity across policyholders and serial dependence over time;
see, for example, \citet{pinquet1997allowance, boucher2007risk, hernandez2009net, shi2014multivariate, baumgartner2015bayesian, denuit2021wishart, verschuren2022frequency, jeong2023multivariate, youn2023simple}.

Despite their practical importance, credibility-type monotonicity for aggregate loss
processes have received far less attention than their univariate counterparts.
In particular, credibility-type monotonicity results are typically formulated for predictive
distributions based on \emph{coarsened} summaries of the past \citep{Frees01012003, purcaru2003dependence}, such as
$S_{t+1}\mid S_{1}, \cdots, S_t$,  and are widely used for forecasting and risk evaluation
\citep{hewitt1970credibility, denuit2006actuarial, goulet2006credibility}.
By contrast, the corresponding ordering problem for the predictive distribution of $S_{t+1}$ conditional on the \emph{full-claim history} up to time $t$, 
\[
S_{t+1}\mid N_{1},\cdots, N_t, \boldsymbol{Y}_{1}, \cdots, \boldsymbol{Y}_{t},
\]
has not been systematically analyzed in the litarature.
This gap matters because the full-claim history contains the most detailed information about past
experience in the collective risk model. It is therefore the natural conditioning information for
checking whether a worse claim history could (paradoxically) lead to a smaller predicted aggregate
loss.

Related work has largely focused on the informational value of different summaries of past
experience.
For example, \citet{goulet2006credibility} and \citet{oh2021predictive} compare the efficiency of posterior risk analysis when one
uses past frequency information versus past aggregate loss information in a collective risk model.
Nevertheless, the core credibility-type order remains open in that line of work.

Indeed, a key motivation for this paper is a subtle but practically serious pathology.
Under classical collective risk model specifications in which frequency and severity are driven by two distinct latent variables, it is possible for a policyholder to worsen the observed history in a natural
sense---for instance, by reporting an additional claim of very small size---and yet obtain a
smaller predictive distribution (hence a smaller predictive mean) for the next aggregate
loss.
Such a mechanism creates a perverse incentive: reporting an extra (small) accident can reduce the
future premium.
Incentives of this kind are undesirable because they can distort reporting behavior, undermine the
integrity of experience rating, and ultimately threaten the stability of the insurance pool \citep{pauly1968economics, arrow1978uncertainty,   boyer1989empirical, Dionne1991}.

The purpose of this paper is to develop credibility-type ordering results for the collective risk
model that rule out such perverse incentives.
Our contributions are twofold.
First, we give an illustrative example demonstrating that standard collective risk models can fail
to satisfy natural monotonicity requirements: an additional claim of sufficiently small size may
decrease the predictive distribution, and hence the premium, for future aggregate loss.
Second, motivated by this pathology, we formalize an appropriate credibility-order notion for the
collective risk model by introducing a partial order on the space of full-claim histories, and we
derive tractable sufficient conditions under which worse histories lead to stochastically larger
predictive distributions for the next aggregate loss.

The remainder of the paper is organized as follows.
Section \ref{sec.2} reviews the necessary background on stochastic orders and notation, and then Section \ref{sec.3} presents a
motivating example demonstrating how a standard collective risk model can create non-monotone premium responses and
strategic incentives.
Section~\ref{sec.4} introduces the full-claim ordering framework and establishes sufficient
conditions for the full-claim credibility order under comonotonic assumptions on latent
variables and additional constraints on the link functions. 
Then, the numerical study is given in Section \ref{sec.5}, and the extension of the results in Section \ref{sec.4} is presented in Section \ref{sec.6}
Technical proofs and auxiliary lemmas are collected in the Appendix.

\section{Preliminaries on stochastic orders}\label{sec.2}

\subsection{Symbols and definitions}
We first introduce basic notation and conventions used throughout the paper. 
The sets $\Real$, $\Real_{\ge 0}$, $\Real_{>0}$, $\mathbb{N}_{0}$, and
$\mathbb{N}$ denote, respectively, the real numbers, the nonnegative real
numbers, the strictly positive real numbers, the nonnegative integers, and
the strictly positive integers.
For $t\in\mathbb{N}$, let $\Real^{t}$ to denote the $t$-dimensional Euclidean space,
with $\Real_{\ge 0}^{\,t}$ and $\Real_{>0}^{\,t}$ denoting its nonnegative and
strictly positive orthants; analogous notation is used for
$\mathbb{N}_{0}^{\,t}$ and $\mathbb{N}^{\,t}$.

We write
\[
y_{1:t} := (y_1,\ldots,y_t)\in\mathbb{R}^t,
\]
with the convention that $y_{1:0}$ denotes the empty sequence.
For vectors $x_{1:t}, y_{1:t} \in \mathbb{R}^t$, let
$x_{1:t} \wedge y_{1:t}$ and $x_{1:t} \vee y_{1:t}$ denote the vectors of their componentwise minimum and maximum, respectively.
We write $x_{1:t} \le y_{1:t}$ if the inequality holds componentwise.
We use the notation
\[
\langle x_{1:t}, y_{1:t}\rangle
\]
to denote the inner product between $x_{1:t}, y_{1:t}\in\mathbb{R}^t$. 
For two random variables \(X\) and \(Y\), we write \(X \perp Y\) to indicate that they are independent.

We next specify the parametric distributions that will be repeatedly employed
in later sections.

\begin{itemize}
  \item \textbf{Gamma distribution.}
  A random variable $X$ is said to follow a ${\rm Gamma}(\alpha,\theta)$
  distribution with shape parameter $\alpha>0$ and scale parameter
  $\theta>0$ if its probability density function is given by
  \[
  f(x)
  =
  \begin{cases}
  \displaystyle
  \frac{1}{\Gamma(\alpha)\,\theta^{\alpha}}
  x^{\alpha-1}\exp\!\left(-\frac{x}{\theta}\right),
  & x>0,\\[1.2ex]
  0, & \text{otherwise}.
  \end{cases}
  \]
  In this parameterization, the mean and variance are
  \[
  \mathbb{E}(X)=\alpha\theta,
  \qquad
  \mathrm{Var}(X)=\alpha\theta^{2}.
  \]
 \item \textbf{Inverse\text{-}Gamma}.
     A random variable $X$ is said to follow\(\mathrm{Inv\text{-}Gamma}(\kappa,\theta)\) if its probability density function is given by
\[
f(x)=
\begin{cases}
  \frac{\theta^\kappa}{\Gamma(\kappa)}x^{-\kappa-1}\exp(-\theta/x), & x>0  \\
  0, & \mbox{otherwise}.
\end{cases}
\]
In this parametrization, \(\E{X}=\theta/(\kappa-1)\) for \(\kappa>1\), and
$\Var{X}=\frac{\theta^2}{(\kappa-1)^2(\kappa - 2 )}$ for $\kappa>2$.
If \(X\sim \mathrm{Gamma}(\alpha,\theta)\), where \(\theta\)
is the scale parameter, then
\[
    \frac{1}{X}\sim \mathrm{Inv\text{-}Gamma}\left(\alpha,\frac{1}{\theta}\right).
\]

  \item \textbf{Weibull distribution.}
  A random variable $X$ is said to follow an ${\rm Weibull}(\alpha,\beta)$ with shape parameter $\alpha>0$ and scale parameter $\beta>0$ if its probability density function is given by
\[
f(x)
=
\begin{cases}
\displaystyle
\frac{\alpha}{\beta} \left( \frac{x}{\beta}\right)^{\alpha-1}{\rm exp}\left( -(x/\beta)^\alpha\right)
,
& x>0,\\[1.2ex]
0, & \text{otherwise}.
\end{cases}
\]

  \item \textbf{Transformed Gamma distribution.}
  A random variable $X$ is said to follow an ${\rm Transformed\text{-}Gamma}(\alpha,\tau, \beta)$ with the first shape parameter $\alpha>0$, the second shape parameter $\tau>0$ and scale parameter $\beta>0$ \citep{kleiber2003statistical, klugman2012loss} if its probability density function is given by
\[
f(x)
=
\begin{cases}
\displaystyle
\frac{\tau}{x\Gamma(\alpha)} \left( \frac{x}{\beta}\right)^{\alpha\tau}
{\rm exp}\left( - (x/\beta)^\tau\right)
,
& x>0,\\[1.2ex]
0, & \text{otherwise}.
\end{cases}
\]
  \item \textbf{Poisson distribution.}
  A random variable $N$ is said to follow a ${\rm Pois}(\lambda)$ distribution
  with intensity parameter $\lambda>0$ if
  \[
  \mathbb{P}(N=n)=\frac{\lambda^{n}}{n!}\exp(-\lambda),
  \qquad n\in\mathbb{N}_{0}.
  \]
\end{itemize}

\subsection{Review: stochastic orders}


For univariate random variables $X$ and $Y$ with densities $f$ and $g$, 
we say that $X$ is smaller than $Y$ in the \emph{likelihood-ratio order (LR order)},
denoted $X \le_{\mathrm{LR}} Y$, if
\[
f(x)\, g(y) \;\ge\; f(y)\, g(x),
\qquad \hbox{for all}\quad x \le y.
\]
Equivalently, the likelihood ratio $f(x)/g(x)$ must be decreasing in $x$ which means that larger values of the variable are relatively more
likely under $Y$ than under $X$.  
The LR order is one of the strongest stochastic orders and will be used below
to characterise how predictive distributions respond to differences in past
observations.

Similarly, $X$ is said to be smaller than $Y$ in the \emph{stochastic order (ST order)},
denoted $X \le_{\mathrm{ST}} Y$, if
\[
\mathbb{P}(X > x) \;\le\; \mathbb{P}(Y > x),
\qquad \hbox{for all}\quad x\in\Real.
\]
It is well known that $X \le_{\mathrm{ST}} Y$ if and only if
$\E{\phi(X)} \le \E{\phi(Y)}$ for all bounded, nondecreasing functions $\phi$,
or, equivalently, for all nondecreasing functions $\phi$ for which the expectations are finite;
see, for example, \citet{denuit2006actuarial, shaked2007stochastic}.

The LR order implies the ST order \citep{denuit2006actuarial, shaked2007stochastic}, and the latter is closely related to premium calculation as shown in Definition \ref{def.1} below. We also have the well-known result on the ST order under the mixture distribution \citep{denuit2006actuarial, shaked2007stochastic}.


\begin{proposition}\citep{denuit2006actuarial, shaked2007stochastic}
\label{prop.p1}
Let $X$ and $\Theta$ be real-valued random variables. Fix a version of the
conditional distribution of $X$ given $\Theta=\theta$, and assume that
\[
\theta_1\le \theta_2 
\quad \Longrightarrow \quad
\left[X\mid \Theta=\theta_1\right]\le_{\mathrm{ST}}
\left[X\mid \Theta=\theta_2\right].
\]
Let $G_1$ and $G_2$ be two candidate distribution functions for the mixing
variable $\Theta$.  For each $i\in\{1,2\}$, let $X_i$ be a real-valued random variable whose
distribution function is given by
\[
\P{X_i\le x}
=
\int_{\Real} \P{X\le x \mid \Theta=\theta}\,dG_i(\theta),
\qquad x\in\Real.
\]
If the mixing distributions are stochastically ordered, that is,
\[
G_1(x)\ge G_2(x),
\qquad x\in\Real,
\]
then
\[
X_1 \le_{\mathrm{ST}} X_2.
\]
\end{proposition}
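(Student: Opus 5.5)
The approach is to reduce the claim to the characterization of the ST order through expectations of bounded nondecreasing functions, and to use the fact that an ST-ordered pair of mixing laws can be realized on a common probability space. Fix $x\in\Real$ and set
\[
h_x(\theta):=\P{X> x\mid \Theta=\theta},
\]
where the chosen version of the conditional distribution is used throughout. By hypothesis, $\theta_1\le\theta_2$ implies $h_x(\theta_1)\le h_x(\theta_2)$, so $h_x$ is nondecreasing. It is also bounded, with values in $[0,1]$. It is measurable because the conditional distribution is a regular version, i.e.\ a Markov kernel.

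The mixture representation gives
\[
\P{X_i>x}=1-\int_{\Real}\P{X\le x\mid\Theta=\theta}\,dG_i(\theta)=\int_{\Real} h_x(\theta)\,dG_i(\theta).
\]
The condition $G_1(x)\ge G_2(x)$ for all $x$ says that $\Theta_1\le_{\mathrm{ST}}\Theta_2$, where $\Theta_i\sim G_i$. The quoted equivalence for the ST order (ST order holds if and only if $\E{\phi(\Theta_1)}\le\E{\phi(\Theta_2)}$ for all bounded nondecreasing $\phi$) then gives
\[
\int h_x\,dG_1\le\int h_x\,dG_2 .
\]
Hence $\P{X_1>x}\le\P{X_2>x}$ for every $x$, which is $X_1\le_{\mathrm{ST}}X_2$.

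To avoid relying on the equivalence as a black box, the key inequality can be shown directly by quantile coupling. Let $U$ be uniform on $(0,1)$ and set $\Theta_i=G_i^{-1}(U)$ with generalized inverses. Since $G_1\ge G_2$ pointwise, $G_1^{-1}(u)\le G_2^{-1}(u)$ for every $u$, so $\Theta_1\le\Theta_2$ almost surely. Monotonicity of $h_x$ then gives $h_x(\Theta_1)\le h_x(\Theta_2)$ pointwise, and taking expectations yields the result.

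The only delicate point is the measurability and monotonicity of $h_x$ for the fixed version of the conditional law. This holds because the hypothesis is stated for every pair $\theta_1\le\theta_2$, not just almost everywhere; that is exactly why the statement insists on fixing a version. Everything else is routine.
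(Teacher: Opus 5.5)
Your proof is correct, and it is the standard argument behind the cited result; the paper gives no proof of its own and only refers to Denuit et al.\ and Shaked--Shanthikumar. The map $\theta\mapsto \P{X>x\mid\Theta=\theta}$ is bounded and nondecreasing for the fixed version, so integrating it against the ST-ordered mixing laws gives $\P{X_1>x}\le\P{X_2>x}$, either through the expectation characterization or through your quantile coupling $G_1^{-1}(U)\le G_2^{-1}(U)$. One small simplification is that measurability of $h_x$ already follows from its monotonicity, so you do not need to appeal to the kernel property.
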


\subsection{Review: credibility orders and collective risk models}

A central requirement in experience rating is credibility-type monotonicity: policyholders with
\emph{worse} past experience should receive a stochastically larger prediction, and hence a
higher premium, for future losses. In the stochastic-order literature, this property is often
viewed as a form of \emph{conditional increasingness in sequence} \citep{denuit2006actuarial, shaked2007stochastic}. Following \citet{ahn2026zero}, we refer to this
property as the \emph{credibility order} to emphasize the insurance perspective. We now give
the formal definition.

\begin{definition}[Credibility order]\label{def.1}
We say that the process $(X_t)_{t\ge1}$ satisfies the credibility order
if, for every $t\ge1$ and all histories $x_{1:t}\le x'_{1:t}$,
\[
\E{\,h(X_{t+1})\mid X_{1:t}=x_{1:t}\,}
\ \le\
\E{\,h(X_{t+1})\mid X_{1:t}=x'_{1:t}\,}
\]
for every nondecreasing measurable function $h$ such that the expectation exists.
\end{definition}
Note that showing the credibility order is equivalent to showing that
\[
\left[X_{t+1} \mid X_{1:t}=x_{1:t}\right]
\ \le_{\mathrm{ST}}\
\left[X_{t+1} \mid X_{1:t}=x'_{1:t}\right]
\]
for every $t\ge1$ and all histories $x_{1:t}\le x'_{1:t}$.

Credibility order plays a central role in insurance ratemaking.
Within the (dynamic) random-effect model framework, such ordering properties
were first established for Poisson models in the seminal work of
\citet{purcaru2003dependence}, and later extended to models whose conditional distributions belong to the exponential dispersion family \citep{denuit2006actuarial}.
However, a corresponding credibility order for
the collective risk model has not yet been systematically studied in the
literature.
This paper is therefore primarily concerned with establishing credibility
orders within the framework of the collective risk model \citep{hernandez2009net, frees2014predictive, garrido2016generalized}. 
For clarity, Sections~\ref{sec.3} and~\ref{sec.4} focus on the
Poisson--Gamma collective risk model, where claim frequencies follow a
Poisson distribution and claim severities follow a Gamma distribution.
Extensions to broader distributional families are presented in
Section~\ref{sec.6}.

\begin{model}\label{mod.0}
Assume that the latent vector $(R_1,R_2)$ has a prior density function $\pi:\Real^2\mapsto \Real_{\ge 0}$. Let $\sigma_1$ and $\sigma_2$ be known non-decreasing activation functions, assumed to yield
valid Poisson intensities and Gamma scales in i and ii.
Conditionally on $(R_1, R_2)$ and known exposures 
\[
\{\lambda_t\}_{t\ge 1}, \quad \{\mu_t\}_{t\ge 1}\subset \Real_{>0}, \quad a>0,
\]
 assume the followings:
\begin{enumerate}
\item[i.] \textbf{Frequency:} $N_t \mid R_1, R_2 \sim \mathrm{Poisson}(\lambda_t \sigma_1(R_1))$ independently over $t$.

\item[ii.] \textbf{Severity:}
Conditional on \((R_1,R_2)\), the individual severities
\(\{Y_{t,j}\}_{t\ge1,\ j\ge1}\) are mutually independent, with
\[
\left[Y_{t,j}\mid R_1,R_2\right]
\sim
\mathrm{Gamma}\left(a,\mu_t\sigma_2(R_2)\right).
\]
where \(a>0\). Moreover, conditional on \((R_1,R_2)\), they are independent of
\(\{N_t\}_{t\ge1}\).

\item[iii.] \textbf{Aggregate loss:} the aggregate loss at time $t$ is
\[
S_t:=\sum_{j=1}^{N_t}Y_{t,j}
\]
with the convention $S_t=0$ when $N_t=0$.
\end{enumerate}
\end{model}

Recall we have $\boldsymbol{Y}_t$ defined in \eqref{eq.r.1} for given $N_t\in\mathbb{N}$.
We also have 
\[
N_{1:t}:=\left( N_1, \cdots, N_t\right)\quad\hbox{and}\quad 
\boldsymbol{Y}_{1:t}:= \left(\boldsymbol{Y}_{1}, \cdots, \boldsymbol{Y}_{t} \right).
\]
We define $n_{1:t}$,  $\boldsymbol{y}_{1:t}$ and $s_{1:t}$ as the realization of $N_{1:t}$, $\boldsymbol{Y}_{1:t}$ and $S_{1:t}$, respectively.
We  call
\[
\boldsymbol{h}_t:=\left(n_{1:t}, \boldsymbol{y}_{1:t}\right)
\]
the \emph{full-claim history} up to time $t$,
and let \(\mathcal{H}_t\) denote the set of all admissible full-claim-histories up to time \(t\);
\[
\mathcal{H}_t
:= \Bigl\{ (n_{1:t},\boldsymbol{y}_{1:t}) \,:\,
n_{1:t}\in \mathbb{N}_{0}^t,\ \boldsymbol{y}_j \in \mathbb{R}_{>0}^{\,n_j}\ \text{for all } j=1,\ldots,t
\Bigr\},
\]
with the convention \(\mathbb{R}_{>0}^{\,0}:=\{\emptyset\}\).

When writing the (conditional) density function of a random
quantity in Model~\ref{mod.0}, we use the generic symbol $f(\,\cdot\,)$ and
suppress subscripts. 
For example, for the claim counts, when conditioning on $N_{1:t}=n_{1:t}$ we
simply write
\[
f(n_{t+1}\mid n_{1:t})
:=\P{\,N_{t+1}=n_{t+1}\mid N_{1:t}=n_{1:t}\,},
\qquad n_{t+1}\in\mathbb{N}_{0}.
\]
We also use $\pi(r_1,r_2)$ to represent the prior density of $(R_1,R_2)$, and
the same notation applies to conditional distributions of $(R_1,R_2)$ such as
\[
\pi(r_1,r_2\mid \boldsymbol{h}_t),
\]
which denotes the posterior density of $(R_1,R_2)$ given the full-claim-history
$\boldsymbol{h}_t\in\mathcal{H}_t$. Likewise, we sometimes suppress the explicit reference to the underlying
random history when conditioning. For instance, instead of writing
\[
N_{t+1}\mid N_{1:t}=n_{1:t},
\]
we may simply write
\[
N_{t+1}\mid n_{1:t},
\]
whenever it is clear from the context that $n_{1:t}$ is the realized value of
$N_{1:t}$.

\section{Motivating example}\label{sec.3}

Although Model~\ref{mod.0} is mathematically well defined, it may be problematic
from a ratemaking perspective. In particular, even if the usual credibility
order holds separately for the frequency and severity components, as in
\eqref{eq.4} and \eqref{eq.5} below, this does not in general guarantee
monotonicity of the predictive distribution of the aggregate loss.
Indeed, we present an example in which worsening a claim history by adding an
additional (but small) claim decreases the predicted aggregate loss,
thereby creating a perverse incentive for strategic reporting. If such
incentives are present, they can undermine the integrity
of experience rating and, in turn, threaten the reliability of the entire ratemaking
system.
To highlight this phenomenon, we consider the following simple and concrete special case
of Model~\ref{mod.0}.

\begin{model}\label{ex.0}
  Consider the setting of Model \ref{mod.0} with 
  \begin{itemize}
    \item Identity activation functions: $\sigma_1(x)=\sigma_2(x)=x$ for all $x\in\Real$
    \item Let $\mu_t=1$  and
    \(
    \lambda_t=1
    \) for all $t\in\mathbb{N}$
    \item $R_1\perp R_2$ with 
    \[
      R_1 \sim \mathrm{Gamma}(\alpha,1/\alpha)
    \]
    and
    \[
    R_2 \sim \mathrm{Inv\text{-}Gamma}(\beta+1, \beta)
    \]
    for some $\alpha, \beta\in\Real_{>0}$. Here, note that the choice of parametrization gives
    \[
    \E{R_1} =\E{R_2} = 1.
    \]

  \end{itemize}
\end{model}

The prior distribution in Model~\ref{ex.0} provides conjugacy.
Simple calculations yield
\[
R_1 \perp R_2 \mid \boldsymbol{h}_t,
\]
with the following marginal posterior distributions:
\[
\left[R_1 \mid \boldsymbol{h}_t\right] \sim \mathrm{Gamma}\!\left(\alpha+\sum_{j=1}^t n_j,\,\frac{1}{\alpha+t}\right)
\]
and
\[
\left[R_2 \mid \boldsymbol{h}_t\right] \sim \mathrm{Inv\text{-}Gamma}\!\left(\beta+1+a\sum_{j=1}^t n_j,\,\beta+\sum_{j=1}^t s_j\right).
\]
These closed-form posteriors facilitate the analysis of the predictive distribution.

By the property of Gamma and Inv-Gamma distributions, see also \citet{purcaru2003dependence}, we have
\begin{equation}\label{eq.2}
 \left[R_1 \mid n_{1:t}\right] \le_{\operatorname{ST}} \left[R_1 \mid n_{1:t}'\right], \qquad 
 \hbox{for}\quad n_{1:t}\le n_{1:t}'.
\end{equation}
Furthermore, assuming $n_{1:t}=n_{1:t}'$, we also have
\begin{equation}\label{eq.3}
 \left[
 R_2 \mid n_{1:t}, \boldsymbol{y}_{1:t}
 \right] \le_{\operatorname{ST}} 
 \left[R_2 \mid n_{1:t}', \boldsymbol{y}_{1:t}'\right], \qquad 
 \hbox{for}\quad \boldsymbol{y}_{1:t}\le \boldsymbol{y}_{1:t}'.
\end{equation}

The results in \eqref{eq.2} and \eqref{eq.3} can further be used to derive the
credibility orders for the frequency and severity components, respectively.
First, for \(0<r_1\le r_1'\), the additive property of the Poisson distribution gives
\[
\left[ N_{t+1}\mid r_1\right]
\le_{\operatorname{ST}}
\left[ N_{t+1}\mid r_1'\right].
\]
Together with the posterior stochastic order in \eqref{eq.2}, Proposition~\ref{prop.p1}
therefore implies
\begin{equation}\label{eq.4}
\left[N_{t+1} \mid n_{1:t}\right] \le_{\operatorname{ST}}
\left[N_{t+1} \mid n_{1:t}'\right], \qquad 
 \hbox{for}\quad n_{1:t}\le n_{1:t}'.
\end{equation}

Similarly, assuming \(n_{1:t}=n_{1:t}'\), for \(0<r_2\le r_2'\) we have
\[
\left[ Y_{t+1,j}\mid r_2\right]
\le_{\operatorname{ST}}
\left[ Y_{t+1,j}\mid r_2'\right]
\]
by the multiplicative property of Gamma distribution.
Together with the posterior stochastic order in \eqref{eq.3}, Proposition~\ref{prop.p1}
gives
\begin{equation}\label{eq.5}
\left[Y_{t+1, j} \mid n_{1:t}, \boldsymbol{y}_{1:t}\right] \le_{\operatorname{ST}}
\left[Y_{t+1, j} \mid n_{1:t}', \boldsymbol{y}_{1:t}'\right], \qquad 
 \hbox{for}\quad \boldsymbol{y}_{1:t}\le \boldsymbol{y}_{1:t}'.
\end{equation}
The detailed derivations of such orders under the general random-effect model
can be found in \citet{purcaru2003dependence} and \citet{denuit2006actuarial}. See also Lemma \ref{lem.2} and \ref{lem.3} in \ref{app.2} for similar arguments.

Based on the stochastic orders in \eqref{eq.4} and \eqref{eq.5}, the model in Model \ref{ex.0} seemingly looks okay as a model for ratemaking in insurance. 
However, since our main concern is the prediction of the next aggregate loss \(s_{t+1}\), the orderings in
\eqref{eq.4} and/or \eqref{eq.5} are not sufficient to guarantee a safe ratemaking process.
The next example shows that the predictive mean of \(s_{t+1}\) can even decrease after a history is
worsened by adding an extra claim.
To formalize this perturbation, fix a history
\(\boldsymbol{h}_t=(n_{1:t},\boldsymbol{y}_{1:t})\in\mathcal{H}_t\). For \(\epsilon\in\Real_{> 0}\), define the modified history
\[
\boldsymbol{h}_t(\epsilon):=\left( n_{1:t}', \boldsymbol{y}_{1:t}'\right),
\]
where
\[
n_k' = n_k
\quad\text{and}\quad
\boldsymbol{y}_{k}' =\boldsymbol{y}_{k}
\qquad \text{for } k=1, \cdots, t-1,
\]
and
\[
n_t'=n_t+1
\quad\text{and}\quad
\boldsymbol{y}_t'=\begin{cases}
(y_{t,1}, \cdots, y_{t, n_t}, \epsilon), & n_t>0;\\
(\epsilon), & n_t=0.
\end{cases}
\]
That is, \(\boldsymbol{h}_t(\epsilon)\) is obtained from \(\boldsymbol{h}_t\) by adding one claim of size
\(\epsilon\) in period \(t\).

\begin{example}\label{ex.2}
Consider the setting in Model~\ref{ex.0}. Then
\[
\E{S_{t+1}\mid \boldsymbol{h}_t}
= a\,\E{R_1\mid \boldsymbol{h}_t}\,\E{R_2\mid \boldsymbol{h}_t}
= a\,
\frac{\alpha+\sum_{j=1}^t n_j}{\alpha+t}\cdot
\frac{\beta+\sum_{j=1}^t s_j}{\beta+a\sum_{j=1}^t n_j}.
\]
Moreover, we have
\[
\E{S_{t+1}\mid \boldsymbol{h}_t(\epsilon)}
= a\,
\frac{\alpha+\sum_{j=1}^t n_j+1}{\alpha+t}\cdot
\frac{\beta+\sum_{j=1}^t s_j+\epsilon}{\beta+a\left(\sum_{j=1}^t n_j+1\right)}.
\]

If 
\begin{equation}\label{eq.6.1}
a\alpha\le \beta,
\end{equation}
then we have
\begin{equation}\label{eq.7}
\E{S_{t+1}\mid \boldsymbol{h}_t}\le \E{S_{t+1}\mid \boldsymbol{h}_t(\epsilon)}
\end{equation}
which tells us that an additional claim surely increase(non-decrease) the premium.

The problematic case is when
\[
a\alpha>\beta.
\]
Then, for any \(\boldsymbol{h}_t\in\mathcal{H}_t\), the violation of the condition in \eqref{eq.6.1} results in 
\[
0<\epsilon<
\frac{\left(\beta+\sum_{j=1}^t s_j\right)(a\alpha-\beta)}
{\left(\alpha+\sum_{j=1}^t n_j+1\right)\left(\beta+a\sum_{j=1}^t n_j\right)}
\]
so that
\begin{equation}\label{eq.6}
\E{S_{t+1}\mid \boldsymbol{h}_t}> \E{S_{t+1}\mid \boldsymbol{h}_t(\epsilon)}.
\end{equation}
As a result, a policyholder with the current full-claim-history
$\boldsymbol{h}_t\in\mathcal{H}_t$ may have an incentive to file an additional
claim of small size $\epsilon>0$. This illustrates a
pathological feature of the model from a ratemaking perspective.


\end{example}

\begin{remark}\label{rem.1}
  While Model \ref{mod.0} with the condition in \eqref{eq.6.1} guarantees the order in \eqref{eq.7} of the base premium, it does not guarantee the order of other type of insurance, such as deductible insurance, in general. Indeed, 
  Table \ref{tab.1} shows such violations of the premium orders in case of the deductible insurance with the following parameter setting under the same model as in Model \ref{ex.0}:
\begin{itemize}
  \item parameter setting satisfying the condition in \eqref{eq.6.1}: $a=2$, $\alpha=1$, $\beta=3$
  \item claim history: $n_1=2$, $\boldsymbol{y}_1=(1.6, 1.6)$
  \item new claim amount: $\epsilon =0.1$
\end{itemize}
In all cases of $d=5, 10, 15, 20$ except for the case of $d=0$, we have
\[
\E{(S_{2}-d)_+ \mid \boldsymbol{h}_1} > \E{(S_{2}-d)_+ \mid \boldsymbol{h}_1(\epsilon)}.
\]
Again, this can be problematic, since it may
encourage unnecessary claim filings. 

\begin{table}[h!t!]
\centering
\begin{tabular}{c c c c c c}
\toprule
$d$
&
$\hbox{cred} =\E{(S_{2}-d)_+ \mid \boldsymbol{h}_1}$
&
&
$\hbox{cred}'=\E{(S_{2}-d)_+ \mid \boldsymbol{h}_1(\epsilon)}$
&
$\hbox{cred}-\hbox{cred}'$
&$\dfrac{\hbox{cred}}{\hbox{cred}'}$
\\
\midrule
$0$ & 2.657 (0.003) & $<$  & 2.800 (0.003) & -0.143 (0.005)   & 0.948   \\
$5$ & 0.629 (0.002) & $>$ & 0.549 (0.002) & 0.080 (0.003)  & 1.138   \\
$10$ & 0.172 (0.001) & $>$ & 0.117 (0.001) & 0.055 (0.002)  & 1.476   \\
$15$ & 0.057 (0.001) & $>$ & 0.029 ($\le$ 0.000) & 0.027 (0.001)  & 1.840  \\
$20$ & 0.021 (0.001) & $>$ & 0.009 ($\le$ 0.000) & 0.012 (0.001) & 2.376  \\
\bottomrule
\end{tabular}
\caption{Stop--loss premium comparison based on Monte Carlo simulation with \(1{,}000{,}000\) replications for different deductible levels \(d\). Monte Carlo standard errors are reported in parentheses.
}
\label{tab.1}
\end{table}

\end{remark}

\section{New credibility order for the collective risk model}\label{sec.4}

Example~\ref{ex.2} and Remark~\ref{rem.1} in Section \ref{sec.3} illustrate a potentially problematic feature of the
classical collective risk model from a ratemaking perspective.
Motivated by this issue, the goal of this section is to develop an analogue of the credibility order
in Definition~\ref{def.1} for the aggregate loss, namely,
\[
\left[S_{t+1}\mid \boldsymbol{h}_t\right] \le_{\operatorname{ST}}
\left[S_{t+1}\mid \boldsymbol{h}_t'\right],
\]
for histories \(\boldsymbol{h}_t,\boldsymbol{h}_t'\in\mathcal{H}_t\) such that
\(\boldsymbol{h}_t\) represents no better experience than \(\boldsymbol{h}_t'\).

However, this formulation is mathematically problematic as stated, because a componentwise
inequality between full-claim histories is not well defined in general: the dimensions of
\(\boldsymbol{h}_t\) and \(\boldsymbol{h}_t'\) may differ due to different claim counts.
Therefore, we first introduce an appropriate partial order on the space of full-claim histories
\(\mathcal{H}_t\). Based on this partial order, we then develop credibility-type ordering results for
aggregate losses that rule out such perverse incentives and support a reliable and well-behaved
ratemaking system.

\begin{definition}\label{def.2}
For \(\boldsymbol{h}_t=(n_{1:t},\boldsymbol{y}_{1:t})\in\mathcal{H}_t\) and
\(\boldsymbol{h}_t'=(n_{1:t}',\boldsymbol{y}_{1:t}')\in\mathcal{H}_t\), we define the \emph{full-claim order}
\(\boldsymbol{h}_t \le_{\mathcal{H}} \boldsymbol{h}_t'\) if the following conditions are satisfied
\begin{itemize}
  \item[i.] \(n_{1:t}\le n_{1:t}'\);
  \item[ii.] For each time point
\(j=1,\ldots,t\),
\[
y_{j,k}\le y'_{j,k},\qquad \text{for all positive integer}\quad k \le\min\{n_j,n_j'\}.
\]
\end{itemize}
\end{definition}

By Definition~\ref{def.2}, the relation $\le_{\mathcal{H}}$ defines a partial order \citep{davey2002introduction} on $\mathcal{H}_t$.
That is, for all $\boldsymbol{h}_t,\boldsymbol{h}_t',\boldsymbol{h}_t''\in\mathcal{H}_t$, we have:
\begin{itemize}
  \item \emph{Reflexivity:} $\boldsymbol{h}_t \le_{\mathcal{H}} \boldsymbol{h}_t$;
  \item \emph{Antisymmetry:} if $\boldsymbol{h}_t \le_{\mathcal{H}} \boldsymbol{h}_t'$ and $\boldsymbol{h}_t' \le_{\mathcal{H}} \boldsymbol{h}_t$, then $\boldsymbol{h}_t=\boldsymbol{h}_t'$;
  \item \emph{Transitivity:} if $\boldsymbol{h}_t \le_{\mathcal{H}} \boldsymbol{h}_t'$ and $\boldsymbol{h}_t' \le_{\mathcal{H}} \boldsymbol{h}_t''$, then $\boldsymbol{h}_t \le_{\mathcal{H}} \boldsymbol{h}_t''$.
\end{itemize}
With the partially ordered set $(\mathcal{H}_t,\le_{\mathcal{H}})$ in place, we can now define an
appropriate credibility order for the collective risk model.

\begin{definition}\label{def.4}
  Under the setting of the collective risk model, 
  we say that the model satisfies the \emph{full-claim credibility order} if the following condition is satisfied
   \[
   \left[S_{t+1}\mid \boldsymbol{h}_{t}\right] 
   \le_{\operatorname{ST}}
   \left[S_{t+1}\mid \boldsymbol{h}_{t}'\right]
   \]
   for every $t\ge 1$ and all histories $\boldsymbol{h}_{t}, \boldsymbol{h}_{t}'\in \mathcal{H}_t$ satisfying $\boldsymbol{h}_{t}\le_{\mathcal{H}} \boldsymbol{h}_{t}'$.

\end{definition}

The remainder of this section concentrates on the development of the collective risk model in Model \ref{mod.0} that satisfies the full-claim credibility order. Specifically, we consider the full-claim credibility order under a
comonotonic assumption between the two latent variables. 
As illustrated by Example~\ref{ex.2}, the difficulty arises because a worsening
of the full-claim history can send conflicting signals to the two random effects;
an additional small claim increases the posterior risk assessment for the random
effect in the frequency component, but the small claim size may simultaneously
lower the posterior risk assessment for the random effect in the severity
component. Thus, the frequency and severity effects may move in opposite
directions, making it unclear which effect dominates in the predictive
distribution of the aggregate loss.

The comonotonic random-effect specification resolves this conflict by replacing
the two-dimensional random effects with a single one-dimensional random effect
that drives both the frequency and severity components. Under this specification, the frequency and severity information updates the
predictive distribution through a single common random effect, rather than
through two separate random effects that may induce opposing stochastic-order
movements. Consequently, the overall effect of the observed history on
the predictive distribution becomes easier to identify and analyze. This also
provides a technical advantage: the stochastic-order analysis is reduced from a
bivariate random-effect problem to a univariate one, thereby simplifying the
mathematical argument.

We also note that the collective risk model with the comonotonic assumption of the latent variables is popular choice in the insurance literature; see, for example,  \citet{baumgartner2015bayesian}  and \citet{zhang2025spatially}.
 We start with the result with the identity activation functions. 

%
%
%
%
%


\begin{theorem}\label{thm.111}
Consider the setting of Model~\ref{mod.0}, except that the two-dimensional
latent vector is replaced by the comonotonic assumption
\[
R_1=R_2=:R>0 \qquad \text{a.s.},
\]
and assume that the activation functions are identity functions on the support of
the latent variable. 
Then, for 
\(
a\in(0,1],
\)
Model~\ref{mod.0} satisfies the full-claim credibility order; that is 
   \[
   \left[S_{t+1}\mid \boldsymbol{h}_{t} \right]
   \le_{\operatorname{ST}}
   \left[S_{t+1}\mid \boldsymbol{h}_{t}'\right]
   \]
   for every $t\ge 1$ and all histories $\boldsymbol{h}_{t}, \boldsymbol{h}_{t}'\in \mathcal{H}_t$ satisfying $\boldsymbol{h}_{t}\le_{\mathcal{H}} \boldsymbol{h}_{t}'$.

\end{theorem}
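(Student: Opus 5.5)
The plan is to reduce the statement to a single-parameter mixture argument and apply Proposition~\ref{prop.p1}. Under the comonotonic specification, the predictive distribution of $S_{t+1}$ given $\boldsymbol{h}_t$ is a mixture of $[S_{t+1}\mid R=r]$ over the posterior $\pi(r\mid \boldsymbol{h}_t)$. I use conditional independence of $S_{t+1}$ and the past given $R$, which holds in Model~\ref{mod.0}. Two facts therefore suffice:
\begin{itemize}
  \item[(a)] $[S_{t+1}\mid R=r]$ is stochastically nondecreasing in $r$;
  \item[(b)] $\boldsymbol{h}_t\le_{\mathcal{H}}\boldsymbol{h}_t'$ implies $[R\mid \boldsymbol{h}_t]\le_{\mathrm{ST}}[R\mid \boldsymbol{h}_t']$.
\end{itemize}
Proposition~\ref{prop.p1} with $\Theta=R$ then yields the theorem.

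For (b), I would write the posterior density explicitly. With identity activations,
\[
\pi(r\mid \boldsymbol{h}_t)\ \propto\ \pi(r)\prod_{j=1}^t e^{-\lambda_j r}(\lambda_j r)^{n_j}\prod_{k=1}^{n_j}\frac{y_{j,k}^{a-1}e^{-y_{j,k}/(\mu_j r)}}{\Gamma(a)(\mu_j r)^a}.
\]
Collecting the factors that depend on $r$ gives
\[
\pi(r\mid \boldsymbol{h}_t)\ \propto\ \pi(r)\, e^{-r\sum_j\lambda_j}\, r^{(1-a)\sum_j n_j}\exp\Bigl(-\frac1r\sum_{j=1}^t \frac{s_j}{\mu_j}\Bigr).
\]
Hence, up to a constant,
\[
\frac{\pi(r\mid \boldsymbol{h}_t')}{\pi(r\mid \boldsymbol{h}_t)}\ \propto\ r^{(1-a)\left(\sum_j n_j'-\sum_j n_j\right)}\exp\Bigl(-\frac1r\sum_{j=1}^t\frac{s_j'-s_j}{\mu_j}\Bigr).
\]
Definition~\ref{def.2} gives $\sum_j n_j'\ge\sum_j n_j$. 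It also gives $s_j'\ge s_j$ for each $j$, because the common claims are ordered componentwise and the extra claims in $\boldsymbol{y}_j'$ are positive. Since $a\le 1$, the power factor is nondecreasing in $r>0$. Since the coefficient of $-1/r$ is nonnegative, the exponential factor is also nondecreasing in $r$. So the ratio is nondecreasing, which is the LR order $[R\mid\boldsymbol{h}_t]\le_{\mathrm{LR}}[R\mid\boldsymbol{h}_t']$, and this implies the ST order. This is exactly where $a\in(0,1]$ enters. For $a>1$ the factor $r^{(1-a)\Delta n}$ decreases, which mirrors the pathology in Example~\ref{ex.2}.

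For (a), I would use a coupling. Take $0<r\le r'$. Construct a Poisson thinning coupling with $N\le N'$, where $N\sim\mathrm{Pois}(\lambda_{t+1}r)$ and $N'\sim\mathrm{Pois}(\lambda_{t+1}r')$. Independently, take i.i.d.\ $Z_k\sim\mathrm{Gamma}(a,1)$ and set $Y_k=\mu_{t+1}rZ_k\le \mu_{t+1}r'Z_k=Y_k'$. Then $\sum_{k\le N}Y_k\le\sum_{k\le N'}Y_k'$ almost surely, using that all terms are positive. This gives the ST monotonicity of $[S_{t+1}\mid R=r]$ in $r$.

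The main obstacle is being careful that the posterior is well defined, meaning the normalizing constants are finite and positive for every admissible history. I would handle this by assuming the prior makes these integrals finite; $e^{-r\sum\lambda}$ already controls large $r$, and $\exp(-c/r)$ with $c\ge0$ is bounded near $0$. The other delicate point is checking that Proposition~\ref{prop.p1} applies with the posteriors as mixing laws; this is immediate once (a) and (b) are in place.
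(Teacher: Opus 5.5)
Your proposal is correct and matches the paper's proof step for step. The posterior ratio $r^{(1-a)\sum_j(n_j'-n_j)}\exp\bigl(-\frac1r\sum_j (s_j'-s_j)/\mu_j\bigr)$ gives $[R\mid\boldsymbol{h}_t]\le_{\mathrm{LR}}[R\mid\boldsymbol{h}_t']$, and Proposition~\ref{prop.p1} is then combined with a coupling showing $[S_{t+1}\mid R=r]$ is stochastically increasing in $r$; the paper couples by Poisson superposition $N+N^*$ rather than by thinning, which makes no difference. Your normalizability concern needs no extra assumption, because the likelihood factor is bounded on $(0,\infty)$ and the prior is proper.
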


\begin{proof}
Fix $t\ge 1$ and take two histories
$\boldsymbol{h}_t=(n_{1:t},\boldsymbol{y}_{1:t})$ and
$\boldsymbol{h}_t'=(n_{1:t}',\boldsymbol{y}_{1:t}')$ in $\mathcal{H}_t$ satisfying
$\boldsymbol{h}_t\le_{\mathcal{H}}\boldsymbol{h}_t'$.
Let
\[
s_j:=\sum_{k=1}^{n_j}y_{j,k},
\qquad
s_j':=\sum_{k=1}^{n_j'}y_{j,k}',
\qquad j=1,\ldots,t.
\]
Since $\boldsymbol{h}_t\le_{\mathcal{H}}\boldsymbol{h}_t'$, we have 
\[
s_j=\sum_{k=1}^{n_j}y_{j,k}\ \le\ \sum_{k=1}^{n_j}y_{j,k}'\ \le\ \sum_{k=1}^{n_j'}y_{j,k}'=s_j',
\qquad j=1,\ldots,t,
\]
and hence
\begin{equation}\label{eq:St_ordered}
\sum_{j=1}^t \frac{s_j}{\mu_j}\ \le\ \sum_{j=1}^t \frac{s_j'}{\mu_j}.
\end{equation}

Then, up to multiplicative constants not depending on $R$, the joint likelihood of $\boldsymbol{h}_t$ given $R$
is
\[
\begin{aligned}
f\!\left(\boldsymbol{h}_t\mid R \right)
&\propto
\prod_{j=1}^t
\Big[e^{-\lambda_j R}(\lambda_j R)^{n_j}\Big]\,
\prod_{j=1}^t\prod_{k=1}^{n_j}\Big[(\mu_j R)^{-a}e^{-y_{j,k}/(\mu_j R)}\Big]\\
&\propto
R^{(1-a)\sum_{j=1}^t n_j}\exp\left(- R\sum_{j=1}^t \lambda_j -\frac{\sum_{j=1}^t s_j/\mu_j}{R}\right).
\end{aligned}
\]

Consequently, the ratio of the posterior densities satisfies, again up to a multiplicative constant not
depending on $R$,
\begin{equation}\label{eq:post_ratio_thm111}
\frac{\pi(R\mid \boldsymbol{h}_t')}{\pi(R\mid \boldsymbol{h}_t)}
\propto
R^{(1-a)\left(\sum_{j=1}^t (n_j'-n_j)\right)}\,
\exp\!\left(-\frac{\sum_{j=1}^t (s_j'-s_j)/\mu_j}{R}\right).
\end{equation}
By $\boldsymbol{h}_t\le_{\mathcal{H}}\boldsymbol{h}_t'$, the map
\[
r\mapsto r^{(1-a)\left(\sum_{j=1}^t (n_j'-n_j)\right)}
\]
 is nondecreasing on $(0,\infty)$ for $a\in(0,1]$, and the map
\[
r\mapsto \exp\!\left(-\frac{\sum_{j=1}^t (s_j'-s_j)/\mu_j}{r}\right)
\] is also nondecreasing on $(0,\infty)$. Hence the right-hand side of
\eqref{eq:post_ratio_thm111} is a nondecreasing function of $R>0$, which implies
\begin{equation}\label{eq.121}
\left[
R\mid \boldsymbol{h}_t'\right] 
\ge_{\mathrm{LR}}\ 
\left[
R\mid \boldsymbol{h}_t\right],
\end{equation}
and in particular $\left[R\mid \boldsymbol{h}_t'\right]\ge_{\mathrm{ST}} \left[R\mid \boldsymbol{h}_t\right]$.

Therefore, Proposition~\ref{prop.p1} along with \eqref{eq.121} and Lemma~\ref{lem.2} in Appendix yields
\[
\left[
S_{t+1}\mid \boldsymbol{h}_t\right]
\le_{\mathrm{ST}}\ 
\left[S_{t+1}\mid \boldsymbol{h}_t'\right].
\]
This completes the proof.
\end{proof}

%
%


Theorem~\ref{thm.111} establishes the full-claim credibility order under the identity activation
functions, but it requires the rather strong restriction, $a\in(0,1]$, on the shape parameter of the severity distribution. 
Alternatively, rather than giving the restriction to the shape parameter of the severity distribution, the following result shows that we can utilize
specific combination of activation functions for $\sigma_1$ and $\sigma_2$.

\begin{theorem}\label{thm.112}
Consider the setting of Model~\ref{mod.0}, except that the two-dimensional
latent vector is replaced by the assumption that 
 the
latent vector $(R_1,R_2)$ is comonotonic and the link functions satisfies
\begin{equation}\label{eq:link_sigma1_sigma2}
\sigma_1(R_1)=\left(\sigma_2(R_2)\right)^{b}\qquad \text{a.s.}
\end{equation}
for some constant $b\ge a$.
Then, Model~\ref{mod.0} satisfies the full-claim credibility order; that is 
   \[
   \left[S_{t+1}\mid \boldsymbol{h}_{t} \right]
   \le_{\operatorname{ST}}\left[S_{t+1}\mid \boldsymbol{h}_{t}'\right]
   \]
   for every $t\ge 1$ and all histories $\boldsymbol{h}_{t}, \boldsymbol{h}_{t}'\in \mathcal{H}_t$ satisfying $\boldsymbol{h}_{t}\le_{\mathcal{H}} \boldsymbol{h}_{t}'$.
   
\end{theorem}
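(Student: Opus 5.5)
The plan is to reduce everything to a one-dimensional random effect, as in the proof of Theorem~\ref{thm.111}. By comonotonicity and the link condition \eqref{eq:link_sigma1_sigma2}, write $\Lambda:=\sigma_2(R_2)>0$, so that $\sigma_1(R_1)=\Lambda^b$ almost surely. Conditionally on $\Lambda$, we then have
\[
N_t\mid\Lambda\sim\mathrm{Poisson}(\lambda_t\Lambda^b),
\qquad
Y_{t,j}\mid\Lambda\sim\mathrm{Gamma}(a,\mu_t\Lambda).
\]
So the whole model is driven by the single latent variable $\Lambda$, and it suffices to work with the posterior of $\Lambda$ given a history.

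First, exactly as in the proof of Theorem~\ref{thm.111}, I will note that $\boldsymbol{h}_t\le_{\mathcal H}\boldsymbol{h}_t'$ implies $n_j\le n_j'$ and $s_j\le s_j'$ for each $j$. Next I compute the likelihood up to factors free of $\Lambda$:
\[
f(\boldsymbol{h}_t\mid \Lambda)\propto
\Lambda^{(b-a)\sum_j n_j}\exp\Bigl(-\Lambda^b\sum_{j}\lambda_j-\Lambda^{-1}\sum_j s_j/\mu_j\Bigr).
\]
The prior cancels in the ratio of posteriors, leaving
\[
\frac{\pi(\Lambda\mid\boldsymbol{h}_t')}{\pi(\Lambda\mid\boldsymbol{h}_t)}
\propto
\Lambda^{(b-a)\sum_j(n_j'-n_j)}\exp\Bigl(-\Lambda^{-1}\sum_j (s_j'-s_j)/\mu_j\Bigr).
\]
Both factors are nondecreasing in $\Lambda>0$: the first because $b\ge a$ and $n_j'\ge n_j$, the second because $s_j'\ge s_j$. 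Hence $[\Lambda\mid \boldsymbol{h}_t]\le_{\mathrm{LR}}[\Lambda\mid\boldsymbol{h}_t']$, and therefore the same ordering holds in the $\mathrm{ST}$ sense.

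It remains to show that $S_{t+1}$ given $\Lambda=\ell$ is $\mathrm{ST}$-increasing in $\ell$; this is the analogue of Lemma~\ref{lem.2}. I will use a coupling. As $\ell$ increases, $\lambda_{t+1}\ell^b$ increases (since $b\ge a>0$), so $N_{t+1}$ increases in $\mathrm{ST}$ by the additive property of the Poisson distribution. Each $Y_{t+1,j}=\mu_{t+1}\ell\,G_j$ with $G_j\sim\mathrm{Gamma}(a,1)$ increases pointwise in $\ell$. Coupling $N_{t+1}$ monotonically, independently of the $G_j$, makes the random sum nondecreasing pathwise, so the conditional distribution of $S_{t+1}$ is $\mathrm{ST}$-increasing in $\ell$ (alternatively, cite Lemma~\ref{lem.2} if it covers this case). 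Proposition~\ref{prop.p1}, applied with mixing variable $\Lambda$ and the two posteriors, then gives $[S_{t+1}\mid\boldsymbol{h}_t]\le_{\mathrm{ST}}[S_{t+1}\mid\boldsymbol{h}_t']$.

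The main obstacle is the reduction to a single latent variable. I need to check that conditioning on $\Lambda$ is equivalent to conditioning on $(R_1,R_2)$ for the law of the data, which holds because that law depends on $(R_1,R_2)$ only through $(\sigma_1(R_1),\sigma_2(R_2))=(\Lambda^b,\Lambda)$. I also need $\Lambda$ to admit a posterior density with respect to a common dominating measure, so that the likelihood-ratio argument is valid. I will handle this by phrasing the argument with posterior measures: the history enters only through a Radon--Nikodym factor $f(\boldsymbol{h}_t\mid\Lambda)$, and a nondecreasing density ratio between the two posteriors still yields the $\mathrm{ST}$ order.
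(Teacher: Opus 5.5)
Your proposal is correct and follows the paper's proof essentially step for step. You reduce to the single latent variable $\sigma_2(R_2)$, show that the posterior likelihood ratio $\Lambda^{(b-a)\sum_j (n_j'-n_j)}\exp\bigl(-\Lambda^{-1}\sum_j (s_j'-s_j)/\mu_j\bigr)$ is nondecreasing to get the LR (hence ST) order of the posteriors, and then combine Proposition~\ref{prop.p1} with the Poisson-splitting coupling that the paper records as Lemma~\ref{lem.3}. Your extra step of phrasing the likelihood-ratio argument through Radon--Nikodym derivatives with respect to the prior law of $\Lambda$ is a small but genuine gain in rigor, because a comonotonic $(R_1,R_2)$ has no joint Lebesgue density, which Model~\ref{mod.0} nominally assumes.
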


\begin{proof}
Fix $t\ge 1$ and take two histories
$\boldsymbol{h}_t=(n_{1:t},\boldsymbol{y}_{1:t})$ and
$\boldsymbol{h}_t'=(n_{1:t}',\boldsymbol{y}_{1:t}')$ in $\mathcal{H}_t$ satisfying
$\boldsymbol{h}_t\le_{\mathcal{H}}\boldsymbol{h}_t'$.
Let $s_j:=\sum_{k=1}^{n_j}y_{j,k}$ and $s_j':=\sum_{k=1}^{n_j'}y_{j,k}'$.
Since we have $\boldsymbol{h}_t\le_{\mathcal{H}}\boldsymbol{h}_t'$, the same logic as in Theorem \ref{thm.111} induces \eqref{eq:St_ordered}.

Under the assumption \eqref{eq:link_sigma1_sigma2}, set
\[
R:=\sigma_2(R_2)\in(0,\infty),
\qquad\text{so that}\qquad
\sigma_1(R_1)=R^{b}\quad\text{a.s.}
\]
so that, with the conditional independence structure inherited from
Model~\ref{mod.0}, we have
\[
\left[N_j\mid R\right]\sim \mathrm{Poisson}(\lambda_j R^{b}),
\qquad
\left[Y_{j,k}\mid R\right]\sim \mathrm{Gamma}(a,\mu_jR).
\]
As a result, up to multiplicative constants not depending on $R$, the joint
likelihood of $\boldsymbol{h}_t$ given $R$ is
\begin{equation}\label{eq.11}
\begin{aligned}
f\left(\boldsymbol{h}_t\mid R \right)
&\propto
\prod_{j=1}^t
\Big[e^{-\lambda_j R^{b}}(\lambda_j R^{b})^{n_j}\Big]\,
\prod_{j=1}^t\prod_{k=1}^{n_j}\Big[(\mu_j R)^{-a}e^{-y_{j,k}/(\mu_j R)}\Big]\\
&\propto
\exp\!\left(-\sum\limits_{j=1}^t\lambda_j R^{b}-\sum_{j=1}^{t}\frac{s_j}{\mu_j R}\right)R^{(b-a)\sum_{j=1}^{t}n_j}.
\end{aligned}
\end{equation}

Then, the ratio of the posterior densities is
\[
\frac{\pi(R\mid \boldsymbol{h}_t')}{\pi(R\mid \boldsymbol{h}_t)}
\propto
\exp\left(-\sum\limits_{j=1}^{t}\frac{(s_j'-s_j)}{\mu_j R}\right)R^{(b-a)\sum\limits_{j=1}^{t}(n_j'-n_j)},
\]
which is clearly non-decreasing function of $R\in(0,\infty)$. Thus
\begin{equation}\label{eq:LR_order_R}
\left[R\mid \boldsymbol{h}_t'\right]\ \ge_{\mathrm{LR}}\ \left[R\mid \boldsymbol{h}_t\right],
\end{equation}
and in particular $\left[R\mid \boldsymbol{h}_t'\right]\ge_{\mathrm{ST}} \left[R\mid \boldsymbol{h}_t\right]$.
Finally, Proposition \ref{prop.p1} based on \eqref{eq:LR_order_R} and Lemma \ref{lem.3} in Appendix conclude the proof.

\end{proof}

Theorems~\ref{thm.111} and~\ref{thm.112} rely substantially on
distribution-specific likelihood structures. Nevertheless, the same proof
strategy can be extended beyond the Poisson--Gamma specification. In particular, the Gamma distributional assumption for the severity component
can be relaxed to a general scale-family specification, which covers a broad
class of severity distributions. We also show that the Poisson distributional assumption for the
frequency component can be replaced by a more flexible parametric count
distribution that contains the Poisson distribution as a special case.
Such extensions of Theorem~\ref{thm.111} are discussed in Section~\ref{sec.6}.

%
%

\section{Empirical study using LGPIF insurance data}\label{sec.5}

\subsection{Data description}

The data analyzed in this study come from the Wisconsin Local Government
Property Insurance Fund (LGPIF), which provides property coverage for various
local government entities as well as other policyholders classified in a
miscellaneous category \citep{frees2016multivariate}. The dataset covers the
period from 2006 to 2011, which we index by \(t=1,\ldots,6\). We use the
observations from 2006 to 2010, corresponding to \(t=1,\ldots,5\), as the
training set, and the observations from 2011, corresponding to \(t=6\), as the
validation set. We focus on collision coverage for new and old vehicles and
exclude policyholders for whom both coverages are zero. After these exclusions, the in-sample data consist of $k=409$ policyholders,
while the validation set contains $k'=253$ policyholders.

The model includes two categorical covariates: the type of local government
entity, with six levels, and a three-level category constructed from the
collision coverage amount. 
The response variables are claim frequency and aggregate loss; 
Table~\ref{tab:entity_summary} summarizes them by entity type, while
Table~\ref{tab:yearly_summary} reports their summary statistics over the
training period. For a more detailed description of the dataset, see
\citet{frees2016multivariate}.

\begin{table}[h!t!]
\centering
\begin{tabular}{ccccc}
\toprule
& & \multicolumn{2}{c}{Frequency($N$)} & Aggregate loss($S$) \\
\cmidrule(lr){3-4}\cmidrule(lr){5-5}
Entity type & Number of Observations & Mean & Proportion of zero & Mean \\
\midrule
City    & 47 & 1.01  & 0.543 & 4685  \\
County  & 55 & 3.43  & 0.155 & 26544 \\
Miscellaneous   & 12  & 0.25  & 0.800 & 1433  \\
School  & 161 & 0.379 & 0.761 & 1675  \\
Town    & 53 & 0.100 & 0.906 & 1197  \\
Village & 81 & 0.362 & 0.738 & 3272  \\
\midrule
Total & 409 & 0.930 & 0.645 & 6455 \\
\bottomrule
\end{tabular}
\caption{Summary statistics of covariates by entity type}
\label{tab:entity_summary}
\end{table}

\begin{table}[h!t!]
\centering
\begin{tabular}{ccccccc}
\toprule
& & \multicolumn{3}{c}{Frequency($N$)} & \multicolumn{2}{c}{Aggregate loss($S$)} \\
\cmidrule(lr){3-5}\cmidrule(lr){6-7}
Year & Obs. & Mean & Variance & Proportion of zero & Mean & Standard deviation \\
\midrule
2006 & 319 & 0.702 & 2.32 & 0.671 & 6039 & 22851 \\
2007 & 302 & 0.917 & 3.82 & 0.649 & 6344 & 18102 \\
2008 & 286 & 1.050 & 5.56 & 0.654 & 7975 & 26968 \\
2009 & 279 & 0.957 & 4.92 & 0.634 & 6087 & 17401 \\
2010 & 281 & 1.060 & 5.61 & 0.612 & 5865 & 18632 \\
\midrule 
Total & 1467 & 0.930 & 4.39 & 0.645 & 6455 & 21125 \\
\bottomrule
\end{tabular}
\caption{Summary statistics of response variables by year}
\label{tab:yearly_summary}
\end{table}
 
\subsection{Model specifications}

We consider three subclasses of Model~\ref{mod.0}. The first two are the proposed models, Models 3 and 4, which are based on comonotonic latent variables and satisfy the assumptions of Theorems~\ref{thm.111} and~\ref{thm.112}, respectively. Consequently, both models satisfy the full-claim credibility order.
\begin{itemize}
  \item \textbf{Model 3.} For
  \[
  a\in(0,1]
  \qquad\text{and}\qquad
  \sigma\in\Real_{>0},
  \]
  we consider Model~\ref{mod.0} with the following parametrization
  \[
  \left[N_t\mid R\right] \sim \mathrm{Poisson}(\lambda_t R),
  \qquad
  \left[Y_{t,j}\mid R\right] \sim \mathrm{Gamma}(a,\mu_t R),
  \]
  where $R=\operatorname{exp}(Q)$
  \[
  Q \sim \operatorname{N}\!\left(-\frac{1}{2}\sigma^2,\sigma^2\right).
  \]

  \item \textbf{Model 4.} For
  \[
  a,\sigma\in\Real_{>0}
  \qquad\text{and}\qquad
  b\in[a,\infty),
  \]
  we consider Model~\ref{mod.0} with the following parametrization
  \[
  \left[N_t\mid R\right] \sim \mathrm{Poisson}(\lambda_t R^b),
  \qquad
  \left[Y_{t,j}\mid R\right] \sim \mathrm{Gamma}(a,\mu_t R)
  \]
  where $R=\operatorname{exp}(Q)$
  \[
  Q \sim \operatorname{N}\!\left(-\frac{1}{2}\sigma^2,\sigma^2\right).
  \]
\end{itemize}
As a benchmark, we also consider Model 5, which is driven by a bivariate latent vector and does not, in general, satisfy the full-claim credibility order. This specification is commonly used in the actuarial literature \citep{jeong2020predictive, oh2021predictive}.
\begin{itemize}
  \item \textbf{Model 5.} For
  \[
  a\in\Real_{>0}, \qquad \rho\in(-1,1)
  \qquad\text{and}\qquad
  \sigma_1,\sigma_2\in\Real_{>0},
  \]
  we consider Model~\ref{mod.0} with the following parametrization
  \[
  \left[N_t\mid R_1,R_2\right] \sim \mathrm{Poisson}(\lambda_t R_1),
  \qquad
  \left[Y_{t,j}\mid R_1,R_2\right] \sim \mathrm{Gamma}(a,\mu_t R_2),
  \]
where
\[
(R_1,R_2)=\left(\exp(Q_1),\exp(Q_2)\right),
\]
and \((Q_1,Q_2)\) follows a bivariate normal distribution with mean vector and
covariance matrix given by
\[
  \begin{pmatrix}
  -\frac{1}{2}\sigma_1^2\\[0.3ex]
  -\frac{1}{2}\sigma_2^2
  \end{pmatrix}
  \quad\hbox{and}\quad 
  \begin{pmatrix}
  \sigma_1^2 & \rho\,\sigma_1\sigma_2 \\
  \rho\,\sigma_1\sigma_2 & \sigma_2^2
  \end{pmatrix},
\]
 respectively.
  
\end{itemize}

Model~\ref{mod.0} is formulated for a single policyholder. To apply it to panel data, we introduce an additional index \(i=1,\ldots,k\) to represent the \(i\)-th policyholder and assume that the corresponding random sequences are independent across \(i\). For simplicity, we assume that the \emph{a priori} rates are constant over time whose values are fixed at time $t=1$, that is,
\[
\lambda_{i,t}=\lambda_{i,1}
\qquad\text{and}\qquad
\mu_{i,t}=\mu_{i,1},
\qquad \forall t\in\mathbb{N},
\]
and set
\[
\lambda_i\equiv\lambda_{i,1}=\operatorname{exp}\left(\langle \mathbf{x}_i,\boldsymbol w_1\rangle\right),
\qquad
\mu_i\equiv\mu_i=\operatorname{exp}\left(\langle \mathbf{x}_i,\boldsymbol w_2\rangle\right),
\]
where \(\boldsymbol w_1,\boldsymbol w_2\in\Real^{p+1}\) are regression coefficients and \(\mathbf{x}_i\in\Real^{p+1}\) is a time-invariant covariate vector for the \(i\)-th policyholder, including an intercept term.

Inference under these models is generally nontrivial because the likelihood
function involves integration over the latent random effects and is typically
not available in closed form. For the proposed comonotonic models, this
integration is only one-dimensional and can therefore be handled accurately by
standard numerical quadrature \citep{breslow1993approximate}. However, a likelihood-based implementation would still require repeated
numerical integration during optimization and would need to be tailored to the structure of each random-effect specification. Alternatively, we
adopt a \emph{Markov chain Monte Carlo (MCMC)} approach, which provides a unified
computational framework for all model specifications considered in this study.
The MCMC framework also directly yields posterior predictive distributions and
uncertainty quantification, which are central to our model assessment. We
implement the MCMC algorithm in \texttt{NIMBLE}, which allows custom
Metropolis--Hastings moves and blocked updates for both the latent effects and
the model parameters \citep{de2017programming, nimble}.

\subsection{Model assessment}

Out-of-sample predictive performance is assessed on the 2011 validation set
using the \emph{predictive mean squared error} (PMSE) and the \emph{predictive mean absolute error}
(predictive PMAE). These measures are computed separately for claim frequency and aggregate
loss. For a response \(y_{i,6}\), representing either \(N_{i,6}\) or \(S_{i,6}\),
let \(\hat y_{i,6}\) denote the corresponding predictive mean. With
\(k'=253\) policyholders in the validation set, we define
\[
\operatorname{PMSE}
=
\frac{1}{k'}\sum_{i=1}^{k'}
\left(y_{i,6}-\hat y_{i,6}\right)^2,
\qquad
\operatorname{PMAE}
=
\frac{1}{k'}\sum_{i=1}^{k'}
\left|y_{i,6}-\hat y_{i,6}\right|.
\]

We also evaluate the \emph{predictive log-likelihood} (PLL) on the 2011
validation set. For claim frequency, the PLL is defined as
\[
\mathrm{PLL}_{N}
=
\sum_{i=1}^{k}
\log \widehat p_N\!\left(n_{i,6}\mid n_{i,5}\right),
\]
where \(\widehat p_N(\cdot\mid n_{i,5})\) denotes the posterior
predictive probability mass function of \(N_{i,6}\). For aggregate loss, the
PLL is defined as
\[
\mathrm{PLL}_{S}
=
\sum_{i=1}^{k}
\log \widehat f_S\!\left(s_{i,6}\mid \boldsymbol h_{i,5}\right),
\]
where \(\widehat f_S(\cdot\mid \boldsymbol h_{i,5})\) denotes the corresponding
predictive density, or probability mass when \(s_{i,6}=0\). Larger
values of PLL indicate better out-of-sample predictive performance. All
quantities are estimated from the MCMC output.

In addition, we report the \emph{Watanabe--Akaike information criterion} (WAIC), computed on the 2006--2010 training data from the MCMC output. WAIC is an in-sample Bayesian model-comparison criterion that evaluates predictive fit while adjusting for effective model complexity. Smaller values of WAIC indicate a better trade-off between fit and complexity.

Table~\ref{tab.2} summarizes the results for predicting frequency \(N_{i,6}\) and aggregate loss \(S_{i,6}\). On the validation set, Model 5 achieved the smallest PMSE and PMAE for both frequency and aggregate loss, indicating the best overall predictive accuracy among the three models. Model 5 also attained the largest PLL and the smallest WAIC, suggesting the best out-of-sample predictive performance as well as the best in-sample predictive fit after accounting for model complexity. Model 4 ranked second across all criteria. In particular, its frequency prediction performance, as measured by PMSE, PMAE, and PLL, was close to that of Model 5, although the gap was slightly larger for aggregate loss. 
Model 3 showed the weakest performance among the three models, although its
frequency prediction performance remained reasonably competitive.

\begin{table}[h!t!]
\centering
\begin{tabular}{cccccccc}
\toprule
& \multicolumn{3}{c}{frequency} & \multicolumn{3}{c}{aggregate loss} & \multirow{2}{*}{WAIC}\\
\cmidrule(lr){2-4}\cmidrule(lr){5-7}
& MSE & MAE & PLL & MSE & MAE & PLL & \\
\midrule
Model 3 & 1.141 & 0.657 & -220.121 & 464960200 & 7807.873 & -994.8281 & 14022.92 \\
Model 4 & 1.094 & 0.607 & -209.365 & 310684700 & 6758.867 & -986.7147 & 13678.08 \\
Model 5 & 1.081 & 0.597 & -207.853 & 259304300 & 6198.940 & -976.5138 & 13577.32 \\
\bottomrule
\end{tabular}
\caption{Out-of-sample predictive performance on the 2011 validation set and WAIC on the 2006--2010 training set.}\label{tab.2}
\end{table}

Despite its favorable predictive performance in Table~\ref{tab.2}, we further assess Model 5 by examining whether it respects the full-claim credibility order under a worsening of the observed claim history. For each policyholder \(i=1,\ldots,k\) in the training set, let
\[
\boldsymbol{h}_{i,5}=(n_{i,1:5},\,\boldsymbol{y}_{i,1:5})
\]
denote the full-claim history up to time \(5\). For a fixed \(\epsilon=0.1\), we define a perturbed counterfactual history by adding one additional claim of size \(\epsilon\) in the most recent period:
\[
\boldsymbol{h}_{i,5}(\epsilon):=(n_{i,1:5}^*,\,\boldsymbol{y}_{i,1:5}^*),
\]
where
\[
n_{i,t}^*=n_{i,t},
\qquad
\boldsymbol{y}_{i,t}^*=\boldsymbol{y}_{i,t},
\qquad t=1,\ldots,4,
\]
and
\[
n_{i,5}^*=n_{i,5}+1,
\qquad
\boldsymbol{y}_{i,5}^*=\begin{cases}
\bigl(y_{i,5,1},\ldots,y_{i,5,n_{i,5}},\epsilon\bigr), & n_{i,5}>0;\\
(\epsilon), &n_{i,5}=0.\\
\end{cases}
\]
That is, \(\boldsymbol{h}_{i,5}(\epsilon)\) is obtained from \(\boldsymbol{h}_{i,5}\) by worsening the most recent claim history through one additional claim with severity \(\epsilon\).

For various values of deductible \(d\ge 0\), Table~\ref{tab.3} reports the number and percentage of violations of the monotonicity condition
\begin{equation}\label{eq.41}
\E{\left(S_{i,6}-d\right)_+\mid \boldsymbol{h}_{i,5}}
>
\E{\left(S_{i,6}-d\right)_+\mid \boldsymbol{h}_{i,5}(\epsilon)},
\qquad i=1,\ldots,k
\end{equation}
which implies the violation of the full-claim credibility order.
Ratio of violations at \(d=0\) is about $2.20\%$, and their frequency increases as the deductible grows. This indicates that the premium-reversal phenomenon persists under deductible insurance and may become more pronounced for larger deductibles in our example. All premiums in \eqref{eq.41} are estimated using \(600{,}000\) Monte Carlo simulations of the aggregate loss \(S_{i,6}\).

By contrast, Models~3 and~4 satisfy the full-claim credibility order by
construction. For Model~3, the estimate $\widehat{a}=0.41\in(0,1]$ satisfies
the condition of Theorem~\ref{thm.111}, thereby guaranteeing the full-claim
credibility order. Model~4 is automatically guaranteed by
Theorem~\ref{thm.112} to satisfy the full-claim credibility order. In
particular, Model~4 achieves predictive performance close to that of Model~5
while avoiding violations of the full-claim credibility order. Therefore, the proposed models provide
an alternative to the classical benchmark model, retaining competitive
predictive performance while guaranteeing the full-claim credibility order.

\begin{table}[h!t!]
\centering
\begin{tabular}{cccccc}\toprule
  &\multicolumn{5}{c}{Deductible-applied} \\
\cmidrule(lr){2-6}
& $d$=0& $d$=0.5$\bar{S}$& $d$=$\bar{S}$& $d$=1.5$\bar{S}$& $d$=2$\bar{S}$\\\midrule

percentage&  2.20\% & 3.18\%& 6.11\%& 10.27\%&  13.94\%\\
count &  9& 13& 25& 42&  57\\
\bottomrule
\end{tabular}
\caption{Counts and percentages of policyholders for whom Model~5 violates the full-claim credibility order under deductible premiums. The deductible levels are expressed in terms of \(\bar S\), where \(\bar S=6455\) is the sample mean aggregate loss in the training data reported in Table~\ref{tab:entity_summary}.}
\label{tab.3}
\end{table}

\section{Extension to various distributions}\label{sec.6}

This section discusses extensions of the full-claim credibility order established in
Theorem~\ref{thm.111} to alternative distributional specifications for the
severity and frequency components of the collective risk model.
Since the proof of Theorem~\ref{thm.111} relies on the specific likelihood structure
of the Poisson--Gamma model, a fully general extension of both components is not
available in this framework. Nevertheless, the same proof strategy can be adapted
to several useful settings.

We first consider the severity component. Specifically, we replace the Gamma
severity model by a general scale-family specification generated from an arbitrary
baseline severity distribution, and derive a monotonicity condition under which
the full-claim credibility order is preserved. This provides a broad extension
for the severity part. We then turn to the frequency component,
where the extension is more limited: we replace the Poisson frequency model by
the Conway--Maxwell--Poisson distribution, which provides a practically relevant
but distribution-specific generalization of the frequency model.

\subsection{Severity part}

First, we replace the Gamma severity assumption in Model~\ref{mod.0} with a scale-family specification generated by an arbitrary baseline severity distribution.
Let $Y^*$ be a continuous random variable supported on $(0,\infty)$ with density
\begin{equation}\label{scale1}
g(y), \qquad y>0.
\end{equation}
For a given parameter $\theta>0$, consider the scaled random variable $Y=\theta Y^*$.
Then the density of $Y$ is
\begin{equation}\label{scale}
g^{\rm [scale]}(y\,;\,\theta):=\frac{1}{\theta}g\!\left(\frac{y}{\theta}\right),\qquad y>0.
\end{equation}
Under this replacement of the severity model, we obtain the following extension of
Theorem~\ref{thm.111}.

\begin{theorem}\label{thm.212}
Consider the setting of Model~\ref{mod.0}, except that the two-dimensional
latent vector is replaced by the assumption that the activation functions are identity functions,
\[
\sigma_1(x)=\sigma_2(x)=x,\qquad x\in\Real,
\]
and that the latent vector $(R_1,R_2)$ is comonotonic in the following sense 
\[
R_1=R_2=:R>0 \qquad\text{a.s.}
\]
For a given density function $g$ in \eqref{scale1}, 
replace the severity assumption in part~ii of Model~\ref{mod.0} by 
\begin{itemize}
%
\item[ii.] \textbf{Severity:}
Conditional on \(R_1=R_2=R\), the severities
\(\{Y_{t,j}\}_{t\ge1,\ j\ge1}\) are mutually independent, with conditional density
\begin{equation}\label{eq.t1}
f(y_{t,j}\mid R)
:=
g^{\rm [scale]}(y_{t,j};\mu_t R),
\qquad t\ge1,\ j\ge1.
\end{equation}
Moreover, conditional on \(R\), the severity array
\(\{Y_{t,j}\}_{t\ge1,\ j\ge1}\) is independent of the claim-count process
\(\{N_t\}_{t\ge1}\), where \(g^{\rm [scale]}\) is defined in \eqref{scale}.
\end{itemize}

Assume the following assumption:
\begin{itemize}
  \item [{\bf A1.}] \(g:(0,\infty)\to(0,\infty)\) is differentiable and that
\[
y\mapsto -y\frac{\partial}{\partial y}\log g(y)
\]
is nonnegative and nondecreasing on \((0,\infty)\).
\end{itemize}

Then, the above collective risk model satisfies the full-claim credibility order; that is,
\[
\left[S_{t+1}\mid \boldsymbol{h}_{t}\right] \le_{\operatorname{ST}} \left[S_{t+1}\mid \boldsymbol{h}_{t}'\right]
\]
for every $t\ge 1$ and all histories $\boldsymbol{h}_{t}, \boldsymbol{h}_{t}'\in \mathcal{H}_t$
satisfying $\boldsymbol{h}_{t}\le_{\mathcal{H}} \boldsymbol{h}_{t}'$.
\end{theorem}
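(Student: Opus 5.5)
The plan follows the proof of Theorem~\ref{thm.111}. First I show that the posterior of the single latent variable $R$ increases in the likelihood-ratio order along $\le_{\mathcal{H}}$. Then I show that $S_{t+1}$ given $R=r$ is stochastically nondecreasing in $r$, and conclude with Proposition~\ref{prop.p1}.

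\textbf{Step 1: the likelihood.} Fix $t\ge1$ and histories $\boldsymbol{h}_t\le_{\mathcal{H}}\boldsymbol{h}_t'$. Up to factors not depending on $R$, the likelihood is
\[
f(\boldsymbol{h}_t\mid R)\propto \prod_{j=1}^t e^{-\lambda_j R}R^{n_j}\prod_{k=1}^{n_j}\frac{1}{\mu_j R}\,g\!\left(\frac{y_{j,k}}{\mu_j R}\right).
\]
The factors $R^{n_j}$ and $R^{-n_j}$ cancel, which is the key simplification. Hence
\[
f(\boldsymbol{h}_t\mid R)\propto e^{-R\sum_j\lambda_j}\prod_{j,k} g\bigl(y_{j,k}/(\mu_jR)\bigr).
\]

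\textbf{Step 2: the posterior ratio.} In $\pi(R\mid\boldsymbol{h}_t')/\pi(R\mid\boldsymbol{h}_t)$ the exponential factors cancel, since Definition~\ref{def.2} only compares the first $\min\{n_j,n_j'\}=n_j$ claims. The ratio therefore splits into two kinds of factors.

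For the common claims $k\le n_j$, the factor is $g(y'_{j,k}/(\mu_jR))/g(y_{j,k}/(\mu_jR))$. For the extra claims $n_j<k\le n_j'$, the factor is $g(y'_{j,k}/(\mu_jR))$.

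Write $\psi(u):=-u\,\frac{\partial}{\partial u}\log g(u)$. For a constant $c>0$,
\[
\frac{d}{dR}\log g(c/R)=\frac{1}{R}\,\psi(c/R).
\]

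For an extra-claim factor, this derivative is nonnegative by the nonnegativity part of \textbf{A1}, so the factor is nondecreasing in $R$. For a common-claim factor, the log-derivative is
\[
\frac{1}{R}\Bigl[\psi\bigl(y'_{j,k}/(\mu_jR)\bigr)-\psi\bigl(y_{j,k}/(\mu_jR)\bigr)\Bigr]\ge0.
\]
This holds because $y_{j,k}\le y'_{j,k}$ and $\psi$ is nondecreasing, which is the monotonicity part of \textbf{A1}.

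The whole ratio is therefore nondecreasing on $(0,\infty)$. This gives $[R\mid\boldsymbol{h}_t]\le_{\mathrm{LR}}[R\mid\boldsymbol{h}_t']$, and hence the ST order.

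\textbf{Step 3: conditional monotonicity and mixing.} It remains to check that $r\mapsto[S_{t+1}\mid R=r]$ is ST-nondecreasing. This is the analogue of Lemma~\ref{lem.2} for a general scale family, and I expect it to be the main, though routine, obstacle. I would argue by coupling.

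For the claim count, realize $N_{t+1}=M(\lambda_{t+1}r)$ from a unit-rate Poisson process $M$, so $N_{t+1}$ is pathwise nondecreasing in $r$. For the severities, realize $Y_{t+1,j}=\mu_{t+1}\,r\,Y_j^*$ with $Y_j^*$ i.i.d. with density $g$, independent of $M$; each is pathwise nondecreasing in $r$ because it is positive and linear in $r$.

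Then $S_{t+1}=\sum_{j\le N_{t+1}}Y_{t+1,j}$ is a nondecreasing function of $r$ almost surely, because both the number of summands and each summand increase. This yields the ST order.

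Applying Proposition~\ref{prop.p1} with $\Theta=R$ and the two posteriors from Step~2 as $G_1$ and $G_2$ then gives $[S_{t+1}\mid\boldsymbol{h}_t]\le_{\mathrm{ST}}[S_{t+1}\mid\boldsymbol{h}_t']$.
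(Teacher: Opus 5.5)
Your proposal is correct and follows the paper's proof essentially step for step: the same cancellation of $R^{n_j}$ against $R^{-n_j}$ in the likelihood, and the same log-derivative of the posterior ratio, split into common-claim and extra-claim terms handled by the monotonicity and nonnegativity halves of \textbf{A1} respectively. This is followed, as in the paper, by LR $\Rightarrow$ ST and Proposition~\ref{prop.p1}. Your Step~3 coupling supplies the argument the paper omits for Lemma~\ref{lem.212}, in the same spirit as the $N+N^*$ construction of Lemma~\ref{lem.2}; one small correction is that $e^{-R\sum_j\lambda_j}$ cancels simply because it does not depend on the history, not because of Definition~\ref{def.2}.
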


\begin{proof}
Fix $t\ge 1$ and take two histories
$\boldsymbol{h}_t=(n_{1:t},\boldsymbol{y}_{1:t})$ and
$\boldsymbol{h}_t'=(n_{1:t}',\boldsymbol{y}_{1:t}')$ in $\mathcal{H}_t$ satisfying
$\boldsymbol{h}_t\le_{\mathcal{H}}\boldsymbol{h}_t'$.
Under the comonotonicity assumption $R_1=R_2=:R$, conditional on $R$ the model becomes
\[
\left[N_j\mid R \right]\sim \mathrm{Poisson}(\lambda_j R),
\]
and, conditional on \(R\), the variables \(Y_{j,k}\) are mutually independent with density
\[
f(y\mid R)
=
\frac{1}{\mu_j R}
g\!\left(\frac{y}{\mu_j R}\right).
\]
Hence, up to multiplicative constants not depending on $R$,
the joint likelihood of $\boldsymbol{h}_t$ given $R$ is
\[
\begin{aligned}
f\!\left(\boldsymbol{h}_t\mid R\right)
&\propto
\prod_{j=1}^t \Big[e^{-\lambda_j R}(\lambda_j R)^{n_j}\Big]\,
\prod_{j=1}^t\prod_{k=1}^{n_j}\Big[(\mu_j R)^{-1}\,g\!\left(\frac{y_{j,k}}{\mu_j R}\right)\Big]\\
&\propto
\prod_{j=1}^t e^{-\lambda_j R}\,
\prod_{j=1}^t\prod_{k=1}^{n_j} g\!\left(\frac{y_{j,k}}{\mu_j R}\right),
\end{aligned}
\]
up to a multiplicative constant not depending on $R$.

Consequently, the ratio of the posterior densities satisfies, again up to a multiplicative constant not
depending on $R$,
\begin{equation}\label{eq:post_ratio_thm212}
\frac{\pi(r\mid \boldsymbol{h}_t')}{\pi(r\mid \boldsymbol{h}_t)}
\propto
\prod_{j=1}^t\prod_{k=1}^{n_j}
\frac{g\!\left(\frac{y_{j,k}'}{\mu_j r}\right)}{g\!\left(\frac{y_{j,k}}{\mu_j r}\right)}
\ \cdot\
\prod_{j=1}^t\prod_{k=n_j+1}^{n_j'} g\!\left(\frac{y_{j,k}'}{\mu_j r}\right).
\end{equation}

Then, we have
\[
\begin{aligned}
  \frac{\partial}{\partial r} \log\frac{\pi(r\mid \boldsymbol{h}_t')}{\pi(r\mid \boldsymbol{h}_t)}
  &=\frac{\partial}{\partial r}
  \left[
  \sum_{j=1}^t\sum_{k=1}^{n_j}\left(\log g\!\left(\frac{y_{j,k}'}{\mu_j r}\right) 
  -\log g\!\left(\frac{y_{j,k}}{\mu_j r}\right)
  \right)
  \right] +
  \frac{\partial}{\partial r}
  \left[
  \sum_{j=1}^t\sum_{k=n_j+1}^{n_j'}\log g\!\left(\frac{y_{j,k}'}{\mu_j r}\right)
  \right]\\
  &=\frac{1}{r}\left[
  \sum_{j=1}^t\sum_{k=1}^{n_j}\left(
  -x\frac{\partial}{\partial x} \log g(x)\bigg\vert_{x=\frac{y_{j,k}'}{\mu_j r}}
  -\left(-x\frac{\partial}{\partial x} \log g(x)\bigg\vert_{x=\frac{y_{j,k}}{\mu_j r}}\right)
  \right)
  \right]\\
  &\qquad\qquad\qquad\qquad\qquad\qquad\qquad
  + \frac{1}{r}\left[
  \sum_{j=1}^t\sum_{k=n_j+1}^{n_j'}\left( -x\frac{\partial}{\partial x} \log g(x)\bigg\vert_{x=\frac{y_{j,k}'}{\mu_j r}}\right)
  \right]\\
  &\ge 0
\end{aligned}
\]
where the first term in the second line is non-negative due to the assumption that 
\[
-y\frac{\partial}{\partial y}\log g(y)
\]
is non-decreasing function of $y>0$, and 
the second term in the second line is non-negative due to the assumption 
$-y\frac{\partial}{\partial y}\log g(y)\ge 0$.
As a result, we have
\[
r\mapsto \frac{\pi(r\mid \boldsymbol{h}_t')}{\pi(r\mid \boldsymbol{h}_t)}
\]
is non-decreasing function on $(0, \infty)$, which implies
\[
\left[R\mid \boldsymbol{h}_t'\right]\ge_{\mathrm{LR}}\left[R\mid \boldsymbol{h}_t\right]
\]
and in particular 
\[
\left[R\mid \boldsymbol{h}_t'\right]\ge_{\mathrm{ST}}\left[R\mid \boldsymbol{h}_t\right]
\]

Therefore, this along with Proposition~\ref{prop.p1}  and
Lemma~\ref{lem.212} in \ref{app.2} yields
\[
\left[S_{t+1}\mid \boldsymbol{h}_t\right]\ \le_{\mathrm{ST}}\ \left[S_{t+1}\mid \boldsymbol{h}_t'\right].
\]
This completes the proof.

\end{proof}

The key sufficient condition for the full-claim credibility order is given in
A1. In the Gamma severity specification of
Theorem~\ref{thm.111}, the condition in A1 holds if and only if
\(a\in(0,1]\), which recovers the restriction imposed in
Theorem~\ref{thm.111}.
The following examples provide severity density functions \(g\) in
\eqref{scale1}, beyond the Gamma distribution, that satisfy A1
 in Theorem~\ref{thm.212}.

\begin{example}\label{ex:weibull_scale_family}
Consider the Weibull density with shape parameter $\alpha>0$ and unit scale parameter $\beta=1$:
\[
g(y\,;\,\alpha)=
\begin{cases}
\alpha\,y^{\alpha-1}e^{-y^{\alpha}}, & y>0,\\
0, & y\le 0.
\end{cases}
\]
Then, for $y>0$,
\[
\log g(y\,;\,\alpha)=\log \alpha +(\alpha-1)\log y-y^{\alpha},
\]
and hence
\[
h(y\,;\,\alpha):=
- y \frac{\partial}{\partial y}\log g(y\,;\,\alpha)
=1-\alpha+\alpha y^{\alpha}.
\]
Moreover,
\[
\frac{\partial}{\partial y}h(y\,;\,\alpha)
=\alpha^{2}y^{\alpha-1}\ge 0,
\qquad y>0.
\]
Therefore, $h(y\,;\,\alpha)$ is nondecreasing on $(0,\infty)$. In addition, if $\alpha\le 1$, then
\[
h(y\,;\,\alpha)=1-\alpha+\alpha y^{\alpha}\ge 1-\alpha\ge 0,
\qquad y>0.
\]
Thus, when $\alpha\le 1$, the condition in Theorem~\ref{thm.212} is satisfied. In this case,
\eqref{eq.t1} corresponds to the specification
\[
\left[Y_{t,j}\mid R\right] \sim {\rm Weibull}(\alpha,\mu_t R)
\]
\end{example}

\begin{example}\label{ex:transformed_gamma}
Consider the transformed gamma density with the first shape parameter $\alpha>0$, the second shape parameter $\tau>0$, and unit scale parameter $\beta=1$:
\[
g(y\,;\,\alpha,\tau)=
\begin{cases}
\displaystyle
\frac{\tau}{\Gamma(\alpha)}\,y^{\alpha\tau-1}e^{-y^{\tau}}, & y>0,\\[1.2ex]
0, & y\le 0.
\end{cases}
\]
Then, for $y>0$,
\[
\log g(y\,;\,\alpha,\tau)
=\log \tau-\log \Gamma(\alpha)+(\alpha\tau-1)\log y-y^{\tau},
\]
and hence
\[
h(y\,;\,\alpha,\tau):=
- y \frac{\partial}{\partial y}\log g(y\,;\,\alpha,\tau)
=1-\alpha\tau+\tau y^{\tau}.
\]
Moreover,
\[
\frac{\partial}{\partial y}h(y\,;\,\alpha,\tau)
=\tau^{2}y^{\tau-1}\ge 0,
\qquad y>0.
\]
Therefore, $h(y\,;\,\alpha,\tau)$ is nondecreasing on $(0,\infty)$. In addition, if $\alpha\tau\le 1$, then
\[
h(y\,;\,\alpha,\tau)=1-\alpha\tau+\tau y^{\tau}\ge 1-\alpha\tau\ge 0,
\qquad y>0.
\]
Thus, when $\alpha\tau\le 1$, the condition in Theorem~\ref{thm.212} is satisfied. In this case,
\eqref{eq.t1} corresponds to the specification
\[
\left[Y_{t,j}\mid R\right] \sim {\rm Transformed\text{-}Gamma}(\alpha,\tau,\mu_t R).
\]
\end{example}

\subsection{Frequency part}

We then replace/extend the Poisson frequency assumption in
Model~\ref{mod.0} with the Conway--Maxwell--Poisson (CMP) distribution
\citep{conway1961queueing}. The CMP family introduces an additional dispersion
parameter and nests the Poisson distribution as the special case $\nu=1$; it is
therefore useful when claim counts exhibit over- or under-dispersion relative to
Poisson, and it has been adopted in wide range of applications including insurance; see, for
example, \citet{shmueli2005useful, lord2008application, yin2024flexible}.
Formally, for parameters $\lambda>0$ and $\nu>0$, we say that
$X\sim \operatorname{CMP}(\lambda,\nu)$ if it has probability mass function
\[
\P{X=x\mid \lambda,\nu}
=\frac{1}{C(\lambda,\nu)}\frac{\lambda^x}{(x!)^\nu},
\qquad x\in\mathbb{N}_{0},
\]
where $C(\lambda,\nu)$ is the normalizing constant
\[
C(\lambda,\nu):=\sum_{j=0}^{\infty}\frac{\lambda^j}{(j!)^\nu}.
\]
Under this replacement of the frequency model, we obtain the following analogue of
Theorem~\ref{thm.111}.

\begin{theorem}\label{thm.211}
Consider the setting of Model~\ref{mod.0}, where the frequency assumption in part~i is replaced by the Conway--Maxwell--Poisson distribution: for some fixed $\nu>0$,
\begin{itemize}
\item[i.] \textbf{Frequency:} $\left[N_t \mid R_1, R_2\right] \sim \mathrm{CMP}(\lambda_t \sigma_1(R_1), \nu)$ independently over $t$.
\end{itemize}
Assumption of the two-dimensional
latent vector is replaced by the comonotonic assumption
\[
R_1=R_2=:R>0 \qquad \text{a.s.},
\]
and assume that the activation functions are identity functions on the support of
the latent variable. 
Then, for $a\in(0,1]$, the above collective risk model satisfies the full-claim credibility order; that is,
\[
\left[S_{t+1}\mid \boldsymbol{h}_{t}\right] \le_{\operatorname{ST}} 
\left[S_{t+1}\mid \boldsymbol{h}_{t}'\right]
\]
for every $t\ge 1$ and all histories $\boldsymbol{h}_{t}, \boldsymbol{h}_{t}'\in \mathcal{H}_t$
satisfying $\boldsymbol{h}_{t}\le_{\mathcal{H}} \boldsymbol{h}_{t}'$.
\end{theorem}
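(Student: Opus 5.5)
The plan follows the two-step strategy of Theorem~\ref{thm.111}. First we show that the posterior of the common random effect $R$ increases in the likelihood-ratio order with respect to $\le_{\mathcal{H}}$. Then we transfer this ordering to $S_{t+1}$ through Proposition~\ref{prop.p1}.

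Fix $t\ge 1$ and histories $\boldsymbol{h}_t\le_{\mathcal{H}}\boldsymbol{h}_t'$, and write $s_j,s_j'$ as in the proof of Theorem~\ref{thm.111}, so that \eqref{eq:St_ordered} holds. Given $R$, the joint likelihood of $\boldsymbol{h}_t$ is, up to constants not depending on $R$,
\[
f(\boldsymbol{h}_t\mid R)\propto \prod_{j=1}^t \frac{(\lambda_j R)^{n_j}}{C(\lambda_j R,\nu)}\prod_{j=1}^t\prod_{k=1}^{n_j}(\mu_jR)^{-a}e^{-y_{j,k}/(\mu_jR)}
\propto \prod_{j=1}^t C(\lambda_jR,\nu)^{-1}\, R^{(1-a)\sum_j n_j}\exp\!\left(-\frac{\sum_j s_j/\mu_j}{R}\right).
\]
The key observation is that the normalizing constants $C(\lambda_jR,\nu)$ depend only on $R$ and the exposures, not on the observed history. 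They therefore cancel in the posterior ratio, which gives
\[
\frac{\pi(r\mid \boldsymbol{h}_t')}{\pi(r\mid \boldsymbol{h}_t)}\propto r^{(1-a)\sum_{j}(n_j'-n_j)}\exp\!\left(-\frac{\sum_j (s_j'-s_j)/\mu_j}{r}\right).
\]
This is exactly the expression in \eqref{eq:post_ratio_thm111}. It is nondecreasing in $r>0$ because $a\in(0,1]$, $n_{1:t}\le n_{1:t}'$, and $s_j\le s_j'$. Hence $[R\mid\boldsymbol{h}_t]\le_{\mathrm{LR}}[R\mid\boldsymbol{h}_t']$, and in particular the two posteriors are ordered in $\le_{\mathrm{ST}}$. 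The factor $(x!)^{-\nu}$ also depends only on the data, so the dispersion parameter $\nu$ plays no role in this step.

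The remaining step is to show that $[S_{t+1}\mid R=r]$ is $\le_{\mathrm{ST}}$-nondecreasing in $r$. Given $R=r$, $S_{t+1}=\sum_{k=1}^{N_{t+1}}Y_{t+1,k}$, where $N_{t+1}\sim\mathrm{CMP}(\lambda_{t+1}r,\nu)$ is independent of the i.i.d.\ $\mathrm{Gamma}(a,\mu_{t+1}r)$ severities. This is the step needing new work, since Lemma~\ref{lem.2} covers only the Poisson case. I would argue it in three parts.

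\emph{(i) Frequency.} The CMP family is monotone in $\lambda$ in the likelihood-ratio order. For $\lambda<\lambda'$ the pmf ratio is proportional to $(\lambda'/\lambda)^x$, which is increasing in $x$. Hence $N_{t+1}\mid r\le_{\mathrm{ST}} N_{t+1}\mid r'$ for $r\le r'$.

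\emph{(ii) Severity.} Each $Y_{t+1,k}=\mu_{t+1}rY^*$ is stochastically increasing in $r$ as a scale family of a positive variable.

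\emph{(iii) Combining.} For $r\le r'$, couple the severities as $Y_k(r)=\mu_{t+1}rY_k^*$ with common $Y_k^*$, and couple the counts via quantile transforms so that $N(r)\le N(r')$ almost surely. Because all summands are nonnegative and ordered,
\[
\sum_{k\le N(r)}Y_k(r)\le \sum_{k\le N(r')}Y_k(r')\quad\text{a.s.}
\]
Equivalently, one can condition on $N$ and apply Proposition~\ref{prop.p1} twice, using that a fixed sum of independent, stochastically ordered nonnegative variables is ordered.

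With this monotonicity in hand (an analogue of Lemma~\ref{lem.2}), Proposition~\ref{prop.p1} applied with mixing distributions $\pi(\cdot\mid\boldsymbol{h}_t)$ and $\pi(\cdot\mid\boldsymbol{h}_t')$ yields $[S_{t+1}\mid\boldsymbol{h}_t]\le_{\mathrm{ST}}[S_{t+1}\mid\boldsymbol{h}_t']$.

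I expect the main obstacle to be confirming that the CMP normalizing constant cancels cleanly. It does, since it carries no data dependence. The only delicate point is that $C(\lambda,\nu)$ must be finite for all $\lambda>0$, which holds for $\nu>0$. The remaining difficulty is the routine check of the CMP stochastic monotonicity used in step (i).
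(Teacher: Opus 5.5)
Your proposal is correct and follows essentially the same route as the paper. The CMP normalizing constants $C(\lambda_j r,\nu)$ cancel in the posterior ratio, leaving the same nondecreasing ratio as in \eqref{eq:post_ratio_thm111}; your version, without the stray factor $a$ in the exponent of the paper's display \eqref{eq:post_ratio_thm211}, is the accurate one, though that factor does not affect monotonicity. The conditional monotonicity of $S_{t+1}$ in $r$ is then obtained exactly as in Lemma~\ref{lem.200}, via likelihood-ratio monotonicity of the CMP family in $\lambda$ and a coupling $\widetilde N\le\widetilde N'$ with common $\mathrm{Gamma}(a,1)$ shocks, before Proposition~\ref{prop.p1} closes the argument.
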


\begin{proof}
Fix $t\ge 1$ and take two histories
$\boldsymbol{h}_t=(n_{1:t},\boldsymbol{y}_{1:t})$ and
$\boldsymbol{h}_t'=(n_{1:t}',\boldsymbol{y}_{1:t}')$ in $\mathcal{H}_t$ satisfying
$\boldsymbol{h}_t\le_{\mathcal{H}}\boldsymbol{h}_t'$.
Let
\[
s_j:=\sum_{k=1}^{n_j}y_{j,k},
\qquad
s_j':=\sum_{k=1}^{n_j'}y_{j,k}',
\qquad j=1,\ldots,t.
\]
Since $\boldsymbol{h}_t\le_{\mathcal{H}}\boldsymbol{h}_t'$, we have
\[
s_j=\sum_{k=1}^{n_j}y_{j,k}\ \le\ \sum_{k=1}^{n_j}y_{j,k}'\ \le\ \sum_{k=1}^{n_j'}y_{j,k}'=s_j',
\qquad j=1,\ldots,t,
\]
and hence
\[
\sum_{j=1}^t \frac{s_j}{\mu_j}\ \le\ \sum_{j=1}^t \frac{s_j'}{\mu_j}.
\]

Under the present model, conditional on $R=r>0$, we have
\[
\left[N_j\mid r\right] \sim \mathrm{CMP}(\lambda_j r,\nu),
\qquad
\left[Y_{j,k}\mid r\right] \sim \mathrm{Gamma}(a,\text{scale}=\mu_j r),
\]
with the severity variables mutually independent over \(j\) and \(k\), conditional on \(R=r\), and conditionally independent of the claim counts.
Then, up to multiplicative constants not depending on $R=r$, the joint likelihood of
$\boldsymbol{h}_t$ given $R$ is
\[
\begin{aligned}
f\!\left(\boldsymbol{h}_t\mid r \right)
&\propto
\prod_{j=1}^t
\left[
\frac{1}{C(\lambda_j r,\nu)}\frac{(\lambda_j r)^{n_j}}{(n_j!)^\nu}
\right]
\prod_{j=1}^t\prod_{k=1}^{n_j}
\left[
(\mu_j r)^{-a}\exp\!\left(-\frac{y_{j,k}}{\mu_j r}\right)
\right]\\
&\propto
\left(\prod_{j=1}^t \frac{1}{C(\lambda_j r,\nu)}\right)
r^{(1-a)\sum_{j=1}^t n_j}\,
\exp\!\left(-\frac{\sum_{j=1}^t s_j/\mu_j}{r}\right),
\end{aligned}
\]
where $C(\cdot,\nu)$ is the CMP normalizing constant.

Consequently, the ratio of the posterior densities satisfies, again up to a multiplicative constant not
depending on $R=r$,
\begin{equation}\label{eq:post_ratio_thm211}
\frac{\pi(r\mid \boldsymbol{h}_t')}{\pi(r\mid \boldsymbol{h}_t)}
\propto
r^{(1-a)\left(\sum_{j=1}^t (n_j'-n_j)\right)}\,
\exp\!\left(-a\frac{\sum_{j=1}^t (s_j'-s_j)/\mu_j}{r}\right),
\end{equation}
where the factors $\prod_{j=1}^t C(\lambda_j r,\nu)^{-1}$ cancel out in the ratio because they do not
depend on the realized counts $n_j$.

By $\boldsymbol{h}_t\le_{\mathcal{H}}\boldsymbol{h}_t'$, we have
$\sum_{j=1}^t (n_j'-n_j)\ge 0$ and $\sum_{j=1}^t (s_j'-s_j)/\mu_j\ge 0$.
Hence, for $a\in(0,1]$, the map
\[
r\mapsto r^{(1-a)\left(\sum_{j=1}^t (n_j'-n_j)\right)}
\]
is nondecreasing on $(0,\infty)$, and the map
\[
r\mapsto \exp\!\left(-\frac{\sum_{j=1}^t (s_j'-s_j)/\mu_j}{r}\right)
\]
is also nondecreasing on $(0,\infty)$. Therefore, the right-hand side of
\eqref{eq:post_ratio_thm211} is nondecreasing in $R=r>0$, which implies
\begin{equation}\label{eq:LR_order_thm211}
\left[R\mid \boldsymbol{h}_t'\right]\ \ge_{\mathrm{LR}}\ \left[R\mid \boldsymbol{h}_t\right],
\end{equation}
and in particular $\left[R\mid \boldsymbol{h}_t'\right]\ge_{\mathrm{ST}} \left[R\mid \boldsymbol{h}_t\right]$.

%

Therefore, this along with Proposition~\ref{prop.p1} and Lemma \ref{lem.200} in \ref{app.2} yields
\[
\left[S_{t+1}\mid \boldsymbol{h}_t\right]\ \le_{\mathrm{ST}}\ \left[S_{t+1}\mid \boldsymbol{h}_t'\right].
\]
This completes the proof.
\end{proof}

Theorems~\ref{thm.212} and~\ref{thm.211} extend Theorem~\ref{thm.111} in two
separate directions. The former replaces the Gamma severity assumption with the
scale-family specification in~\eqref{scale}, whereas the latter replaces the
Poisson frequency assumption with the CMP distribution. Both extensions are
formulated under the identity activation functions of Theorem~\ref{thm.111}.
These two extensions can also be combined, yielding a version of
Theorem~\ref{thm.111} with a CMP frequency component and a scale-family severity
component. Analogous variants under the link-function setting of
Theorem~\ref{thm.112} can likewise be considered; however, to keep the present
paper focused, we do not pursue these additional variants here.

\section{Conclusion}

This paper develops a credibility-order framework tailored to collective risk
models and identifies tractable conditions under which a worsening full-claim
history leads to a stochastically larger predictive distribution for future
aggregate loss. This monotonicity property is practically important in insurance
ratemaking: if it is violated, worse claim experience may lead to a lower
premium. Such premium reversals create undesirable reporting incentives and may
undermine the integrity of experience rating. Our empirical analysis further
shows that this phenomenon can occur at a non-negligible frequency in standard
collective risk models with bivariate random effects, a common benchmark
specification in the actuarial literature \citep{jeong2020predictive,
oh2021predictive}.

To address this issue, we introduced a new notion of full-claim credibility
order and derived sufficient conditions under which it holds. For clarity, we
focused on static random-effect models, although the same ideas may be extended
to dynamic random-effect settings under suitable assumptions on the latent
process. The current theory remains partly distribution-specific: while the
severity component admits a broad scale-family extension through
Theorem~\ref{thm.212}, the frequency-side results are more restrictive and are
currently limited to the Poisson and CMP settings considered in this paper.
Extending the framework to richer frequency models and to dynamic random-effect
settings therefore remains a promising direction for future research.

%
%

\section*{Data Availability}
The data and code used in this study are available at:\\
\url{https://github.com/jaeyoun-ahn/CRM-Order}


\section*{Acknowledgements}

This research was supported by the National Research Foundation of Korea (NRF) funded by the Ministry of Education (2021R1A6A1A10039823) and by the Ministry of Science and ICT (RS-2025-23524530). Parts of this work were completed while Jihyun Park and Jieun Kim were visiting the University of Regina, and while Jae Youn Ahn was staying at Macquarie University during his sabbatical leave as a Visiting Professor.

\pagebreak
\bibliographystyle{apalike}
\bibliography{bib_tex}

\pagebreak
\appendix

\section{Auxiliary results}\label{app.2}

The following are the auxiliary results for Theorem \ref{thm.111}.
\begin{lemma}\label{lem.2}
Consider the setting in Theorem~\ref{thm.111}. Fix $0<r<r'$. Then
\[
\left[S_{t+1}\mid R=r\right]\ \le_{\mathrm{ST}}\ \left[S_{t+1}\mid R=r'\right].
\]
\end{lemma}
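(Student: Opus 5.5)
We need to show that, conditional on $R=r$, the aggregate loss $S_{t+1}$ with $N_{t+1}\sim\mathrm{Poisson}(\lambda_{t+1}r)$ and i.i.d.\ $Y_{t+1,j}\sim\mathrm{Gamma}(a,\mu_{t+1}r)$ (independent of $N_{t+1}$) increases in the usual stochastic order as $r$ increases. The plan is a coupling on a single probability space, so that the claim follows from pathwise domination.

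First we handle the two components separately. For the frequency, the superposition property of the Poisson distribution lets us write
\[
N'\eqd N + M,
\]
where $N\sim\mathrm{Poisson}(\lambda_{t+1}r)$ and $M\sim\mathrm{Poisson}(\lambda_{t+1}(r'-r))$ are independent. Hence $N\le N'$ almost surely under the coupling.

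For the severities, take i.i.d.\ $Y^*_j\sim\mathrm{Gamma}(a,1)$, independent of $(N,M)$. Set $Y_j:=\mu_{t+1}rY^*_j$ and $Y'_j:=\mu_{t+1}r'Y^*_j$. By the scale property of the Gamma distribution, these have the required conditional laws. Since $r<r'$ and $Y^*_j>0$, we get $Y_j\le Y'_j$ almost surely for every $j$.

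Next we compare the sums. Define
\[
S:=\sum_{j=1}^{N}Y_j,\qquad S':=\sum_{j=1}^{N+M}Y'_j .
\]
These have the laws of $[S_{t+1}\mid R=r]$ and $[S_{t+1}\mid R=r']$, respectively. The required independence holds because the counts $(N,M)$ are independent of the sequence $(Y^*_j)$, matching the conditional independence in Model~\ref{mod.0}. Pathwise, all terms are nonnegative, so
\[
S=\sum_{j=1}^{N}Y_j\le\sum_{j=1}^{N}Y'_j\le\sum_{j=1}^{N+M}Y'_j=S'.
\]
Almost-sure domination gives $\P{S>x}\le\P{S'>x}$ for all $x$, which is the ST order. (For $x<0$ both probabilities equal $1$.)

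Alternatively, one can argue through closure properties. Random sums are preserved under the ST order when both the count and the summands are ordered, which is the standard compound-sum result in \citet{shaked2007stochastic}. One would then combine $\mathrm{Poisson}(\lambda r)\le_{\mathrm{ST}}\mathrm{Poisson}(\lambda r')$ with $\mathrm{Gamma}(a,\mu r)\le_{\mathrm{ST}}\mathrm{Gamma}(a,\mu r')$.

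There is no real obstacle. The only point needing care is checking that the coupled variables have exactly the conditional joint law specified in Model~\ref{mod.0}, namely independence between counts and severities given $R$. Nonnegativity of the severities is essential for the second inequality, since adding the extra $M$ claims must not decrease the sum. Note also that the restriction $a\in(0,1]$ is not needed for this lemma; it matters only for the posterior LR ordering in Theorem~\ref{thm.111}.
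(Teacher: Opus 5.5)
Your proposal is correct and follows essentially the same argument as the paper: you use the Poisson superposition coupling $N+N^*$ together with a common $\mathrm{Gamma}(a,1)$ sequence scaled by $\mu_{t+1}r$ and $\mu_{t+1}r'$, which gives $S\le S'$ almost surely and hence the ST order. The only difference is cosmetic (you scale first and then add the extra claims, whereas the paper adds the claims first and then scales), and your remark that the restriction $a\in(0,1]$ plays no role in this lemma is also correct.
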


\begin{proof}
Fix $0<r<r'$. Let $(N,N^*,\{Z_k\}_{k\ge1})$ be independent random variables such that
\[
N\sim \mathrm{Poisson}(\lambda_{t+1} r),\qquad
N^*\sim \mathrm{Poisson}\bigl(\lambda_{t+1}(r'-r)\bigr),\qquad
Z_k\ \iid\ \mathrm{Gamma}(a,\text{scale}=1).
\]
Define
\[
S:=\mu_{t+1}r\sum_{k=1}^{N} Z_k,
\qquad
S':=\mu_{t+1}r'\sum_{k=1}^{N+N^*} Z_k.
\]
Then, with probability $1$,
\[
\begin{aligned}
S
&=\mu_{t+1}r\sum_{k=1}^{N}Z_k\\
&\le \mu_{t+1}r\sum_{k=1}^{N+N^*}Z_k\\
&\le \mu_{t+1}r'\sum_{k=1}^{N+N^*}Z_k
=S',
\end{aligned}
\]
where the first inequality uses $N^*\ge 0$ and $Z_k\ge 0$ a.s., and the second uses $0<r<r'$. 
Hence, since $S \le S'$ almost surely, the usual stochastic order $S \le_{\mathrm{ST}} S'$ holds; see, for example, Theorem~1.A.1 in \citet{shaked2007stochastic}.

Finally, from the following observations
\[
\mu_{t+1}rZ_k\sim \mathrm{Gamma}(a,\text{scale}=\mu_{t+1}r)\quad\hbox{and}\quad
N+N^*\sim \mathrm{Poisson}(\lambda_{t+1}r'), 
\]
we have
\[
S\ \eqd\ \left[S_{t+1}\mid R=r\right]
\qquad\text{and}\qquad
S'\ \eqd\ \left[S_{t+1}\mid R=r'\right].
\]
This completes the proof.
\end{proof}

The following result is the version of Lemma \ref{lem.2} in Appendix with the comonotonic assumption of the latent variables satisfying \eqref{eq:link_sigma1_sigma2}. This result is used in the proofs of Theorem \ref{thm.112}.

\begin{lemma}\label{lem.3}
Consider the setting in Theorem~\ref{thm.112}. Fix $0< r<r'$.
Then
\[
\left[S_{t+1}\mid R=r\right]\ \le_{\mathrm{ST}}\ \left[S_{t+1}\mid R=r'\right].
\]
\end{lemma}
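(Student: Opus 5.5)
I will mimic the coupling argument of Lemma~\ref{lem.2}, now with the intensity $\lambda_{t+1}r^{b}$ in place of $\lambda_{t+1}r$. In the setting of Theorem~\ref{thm.112} we work with the reparametrized latent variable $R=\sigma_2(R_2)$, so that $\sigma_1(R_1)=R^b$. Conditional on $R=r$ we then have
\[
N_{t+1}\sim\mathrm{Poisson}(\lambda_{t+1}r^{b})
\quad\text{and}\quad
Y_{t+1,k}\sim\mathrm{Gamma}(a,\mu_{t+1}r),
\]
with the severities i.i.d.\ and independent of $N_{t+1}$.

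Fix $0<r<r'$. Since $b\ge a>0$, the map $r\mapsto r^b$ is increasing on $(0,\infty)$, so $r^b<r'^b$. Now take independent random variables
\[
N\sim\mathrm{Poisson}(\lambda_{t+1}r^b),\qquad
N^*\sim\mathrm{Poisson}\bigl(\lambda_{t+1}(r'^b-r^b)\bigr),\qquad
Z_k\iid\mathrm{Gamma}(a,1).
\]
With these, define
\[
S:=\mu_{t+1}r\sum_{k=1}^{N}Z_k,
\qquad
S':=\mu_{t+1}r'\sum_{k=1}^{N+N^*}Z_k.
\]
Because $N^*\ge0$, $Z_k\ge0$ and $r<r'$, the same two-step chain of inequalities as in Lemma~\ref{lem.2} gives $S\le S'$ almost surely. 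By Theorem~1.A.1 of \citet{shaked2007stochastic}, this yields $S\le_{\mathrm{ST}}S'$.

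It remains to identify the laws. By Poisson additivity, $N+N^*\sim\mathrm{Poisson}(\lambda_{t+1}r'^b)$, and by the scaling property of the Gamma distribution, $\mu_{t+1}rZ_k\sim\mathrm{Gamma}(a,\mu_{t+1}r)$. Hence
\[
S\eqd[S_{t+1}\mid R=r]
\quad\text{and}\quad
S'\eqd[S_{t+1}\mid R=r'].
\]

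There is no real obstacle here. The only point needing care is that the coupling requires $r^b$ to be nondecreasing in $r$, which holds because $b\ge a>0$; the condition $b\ge a$ itself is not needed for this lemma and matters only for the LR step in Theorem~\ref{thm.112}. One should also note that conditioning on $R$ is equivalent to conditioning on $(R_1,R_2)$ through the links, so the conditional laws of $N_{t+1}$ and $Y_{t+1,k}$ are as stated.
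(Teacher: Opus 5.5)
Your proposal is correct and follows the paper's proof essentially verbatim. It uses the same Poisson-splitting coupling with $N\sim\mathrm{Poisson}(\lambda_{t+1}r^{b})$ and $N^*\sim\mathrm{Poisson}(\lambda_{t+1}((r')^{b}-r^{b}))$, the same almost-sure chain $S\le S'$, and the same identification of laws, and you correctly note that only $b>0$ (guaranteed by $b\ge a>0$) is needed here, with $b\ge a$ reserved for the likelihood-ratio step of Theorem~\ref{thm.112}.
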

\begin{proof}
Fix $0< r<r'$. Let $(N,N^*,\{Z_k\}_{k\ge1})$ be independent random variables such that
\[
N\sim \mathrm{Poisson}(\lambda_{t+1} r^{b}),\qquad
N^*\sim \mathrm{Poisson}\bigl(\lambda_{t+1}((r')^{b}-r^{b})\bigr),\qquad
Z_k\ \iid\ \mathrm{Gamma}(a,1).
\]
Define
\[
S:=\mu_{t+1}r\sum_{k=1}^{N} Z_k,
\qquad
S':=\mu_{t+1}r'\sum_{k=1}^{N+N^*} Z_k.
\]

Then, with probability $1$, we have the following 
\[ 
\begin{aligned} 
S &=\mu_{t+1}r\sum_{k=1}^{N}Z_k\\ 
&\le \mu_{t+1}r\sum_{k=1}^{N+N^*}Z_k\\ 
&\le \mu_{t+1}r'\sum_{k=1}^{N+N^*}Z_k=S' 
\end{aligned} 
\]
where the first inequality is from the observations that $N^*\ge 0$ and $Z_k\ge 0$ with probability $1$.
Hence, since $S \le S'$ almost surely, the usual stochastic order $S \le_{\mathrm{ST}} S'$ holds again by Theorem~1.A.1 in \citet{shaked2007stochastic}.

Finally, from $\mu_{t+1}rZ_k\sim \mathrm{Gamma}(a,\mu_{t+1}r)$ and
$\mu_{t+1}r'Z_k\sim \mathrm{Gamma}(a,\mu_{t+1}r')$, and 
$N+N^*\sim \mathrm{Poisson}(\lambda_{t+1} (r')^b)$, we conclude 
\[
S\ \eqd\ \left[S_{t+1}\mid R=r\right]
\qquad\text{and}\qquad
S'\ \eqd\ \left[S_{t+1}\mid R=r'\right].
\]
This completes the proof.
\end{proof}

The following lemma is the auxiliary result of Theorem \ref{thm.211}.

\begin{lemma}\label{lem.200}
Consider the setting in Theorem~\ref{thm.211}. Fix $0<r<r'$. Then
\[
\left[S_{t+1}\mid R=r\right]\ \le_{\mathrm{ST}}\ \left[S_{t+1}\mid R=r'\right].
\]
\end{lemma}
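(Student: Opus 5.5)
The plan is to mimic the coupling argument of Lemma~\ref{lem.2}, with one change. The CMP family has no additive (superposition) property, so the step $N+N^*$ with an independent Poisson increment is not available. I would replace it with a monotone quantile coupling of the two claim counts, justified by a likelihood-ratio comparison of the CMP distributions in the intensity parameter.

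\textbf{Step 1: order the CMP counts.} Fix $0<r<r'$ and write $p_r(x):=\P{N_{t+1}=x\mid R=r}=\frac{(\lambda_{t+1}r)^x}{C(\lambda_{t+1}r,\nu)(x!)^\nu}$. Then
\[
\frac{p_{r'}(x)}{p_r(x)}=\frac{C(\lambda_{t+1}r,\nu)}{C(\lambda_{t+1}r',\nu)}\left(\frac{r'}{r}\right)^{x}.
\]
This ratio is nondecreasing in $x\in\mathbb{N}_0$, because $r'/r>1$ and the normalizing constants do not depend on $x$. Hence $[N_{t+1}\mid R=r]\le_{\mathrm{LR}}[N_{t+1}\mid R=r']$, and since the LR order implies the ST order, also $[N_{t+1}\mid R=r]\le_{\mathrm{ST}}[N_{t+1}\mid R=r']$. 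Finiteness of $C(\lambda,\nu)$ for $\nu>0$ is standard and needs no further comment.

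\textbf{Step 2: couple the counts.} Let $F_r$ and $F_{r'}$ be the distribution functions of these counts. Take $U\sim\mathrm{Uniform}(0,1)$ independent of $Z_k\iid\mathrm{Gamma}(a,1)$, and set $N:=F_r^{-1}(U)$ and $N':=F_{r'}^{-1}(U)$ using generalized inverses. The ST order from Step 1 gives $F_r\ge F_{r'}$ pointwise, hence $N\le N'$ almost surely; this is the standard construction behind Theorem~1.A.1 of \citet{shaked2007stochastic}.

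\textbf{Step 3: order the aggregate losses.} Define
\[
S:=\mu_{t+1}r\sum_{k=1}^{N}Z_k,\qquad S':=\mu_{t+1}r'\sum_{k=1}^{N'}Z_k .
\]
Exactly as in Lemma~\ref{lem.2}, almost surely
\[
S\le \mu_{t+1}r\sum_{k=1}^{N'}Z_k\le S',
\]
where the first inequality uses $Z_k\ge0$ and $N\le N'$, and the second uses $r<r'$.

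\textbf{Step 4: identify the distributions.} The counts $N$ and $N'$ are independent of $\{Z_k\}$, and $\mu_{t+1}rZ_k\sim\mathrm{Gamma}(a,\mu_{t+1}r)$. Therefore $S\eqd[S_{t+1}\mid R=r]$ and $S'\eqd[S_{t+1}\mid R=r']$, because in the model the severities are conditionally independent of the counts given $R$. Almost-sure domination then yields the claimed ST order.

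The only genuinely new point, and the main obstacle, is Step 1 together with the coupling that replaces the Poisson thinning/superposition argument. Once the LR monotonicity of CMP in $\lambda$ is noted, the rest is routine.
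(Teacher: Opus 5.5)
Your proposal is correct and follows essentially the same route as the paper. Both arguments compare the CMP pmfs in $r$ by likelihood ratio, couple the counts monotonically so that $\widetilde N\le\widetilde N'$, and conclude from almost sure domination of the two compound sums built on common $\mathrm{Gamma}(a,1)$ variables; your explicit quantile construction and the distributional identification in Step 4 spell out details the paper's proof leaves implicit.
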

\begin{proof}
  Fix $0<r<r'$. For the CMP pmf,
\[
\P{N=n\mid R=r}
=
\frac{1}{C(\lambda_{t+1} r,\nu)}\frac{(\lambda_{t+1} r)^n}{(n!)^\nu},
\qquad n\in\mathbb{N}_{0},
\]
we have, for each $n\ge 0$,
\[
\frac{\P{N=n\mid R=r'}}{\P{N=n\mid R=r}}
=
\frac{C(\lambda_{t+1} r,\nu)}{C(\lambda_{t+1} r',\nu)}
\left(\frac{r'}{r}\right)^n,
\]
which is increasing in $n$. Thus,
\[
\left[N_{t+1}\mid R=r\right]\ \le_{\mathrm{LR}}\ \left[N_{t+1}\mid R=r'\right],
\]
and in particular $N_{t+1}\mid R=r\le_{\mathrm{ST}} N_{t+1}\mid R=r'$.
Hence, one can construct a coupling $(\widetilde N,\widetilde N')$ such that
\[
\widetilde N \ \eqd\ \left[N_{t+1}\mid R=r\right],
\qquad
\widetilde N' \ \eqd\ \left[N_{t+1}\mid R=r'\right],
\qquad
\widetilde N\le \widetilde N'\ \ \text{a.s.}
\]
Let $\{Z_k\}_{k\ge 1}$ be i.i.d.\ $\mathrm{Gamma}(a,1)$, independent of $(\widetilde N,\widetilde N')$,
and define
\[
S:=\mu_{t+1}r\sum_{k=1}^{\widetilde N} Z_k,
\qquad
S':=\mu_{t+1}r'\sum_{k=1}^{\widetilde N'} Z_k.
\]
Then, with probability $1$,
\[
S
=\mu_{t+1}r\sum_{k=1}^{\widetilde N} Z_k
\ \le\
\mu_{t+1}r\sum_{k=1}^{\widetilde N'} Z_k
\ \le\
\mu_{t+1}r'\sum_{k=1}^{\widetilde N'} Z_k
=S',
\]
and therefore $S\le_{\mathrm{ST}} S'$. 
\end{proof}

The following lemma is the auxiliary result of Theorem \ref{thm.212}.
This lemma is the version of Lemma \ref{lem.2} where Gamma distributional assumption for the severity part is replaced by arbitrary distribution
\[
g^{\rm [scale]}(x\,;\, \theta)
\]
defined in \eqref{scale} as in the setting in Theorem~\ref{thm.212}. Since the proof is similar to that of Lemma \ref{lem.2}, we omit the proof here. 
\begin{lemma}\label{lem.212}
Consider the setting in Theorem~\ref{thm.212}. Fix $0<r<r'$. Then
\[
\left[S_{t+1}\mid R=r\right]\ \le_{\mathrm{ST}}\ \left[S_{t+1}\mid R=r'\right].
\]
\end{lemma}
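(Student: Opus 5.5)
The plan is to mimic the coupling argument of Lemma~\ref{lem.2}, replacing the Gamma representation of the severities by the scale-family representation in \eqref{scale}. Fix $0<r<r'$ and write the frequency as $\mathrm{Poisson}(\lambda_{t+1}r)$ and the severities as $\mu_{t+1}r\,Z_k$, where $Z_k$ has density $g$. This is legitimate because, by \eqref{scale}, $\theta Y^*$ with $Y^*\sim g$ has density $g^{\rm [scale]}(\cdot\,;\theta)$.

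The construction proceeds as follows.
\begin{enumerate}
\item Take mutually independent random variables
\[
N\sim\mathrm{Poisson}(\lambda_{t+1}r),\qquad
N^*\sim\mathrm{Poisson}\bigl(\lambda_{t+1}(r'-r)\bigr),\qquad
Z_k\iid g \text{ on }(0,\infty),\ k\ge 1.
\]
\item Define
\[
S:=\mu_{t+1}r\sum_{k=1}^{N}Z_k,
\qquad
S':=\mu_{t+1}r'\sum_{k=1}^{N+N^*}Z_k .
\]
\item Since $Z_k>0$ a.s. and $N^*\ge0$, we have $S\le \mu_{t+1}r\sum_{k=1}^{N+N^*}Z_k$. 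Since $r<r'$ and the sum is nonnegative, this is in turn at most $S'$. Hence $S\le S'$ almost surely, and Theorem~1.A.1 of \citet{shaked2007stochastic} gives $S\le_{\mathrm{ST}}S'$.
\end{enumerate}

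It remains to identify the laws of $S$ and $S'$ with $[S_{t+1}\mid R=r]$ and $[S_{t+1}\mid R=r']$. We have $N+N^*\sim\mathrm{Poisson}(\lambda_{t+1}r')$ by Poisson additivity, and $\mu_{t+1}r\,Z_k$ has density $g^{\rm [scale]}(\cdot\,;\mu_{t+1}r)$ by \eqref{scale}. The counts are independent of the i.i.d. severities, exactly as in the conditional structure of Theorem~\ref{thm.212}. Therefore $S\eqd[S_{t+1}\mid R=r]$, and similarly $S'\eqd[S_{t+1}\mid R=r']$, since a compound sum's law is determined by the count law and the i.i.d. summand law.

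There is no real obstacle here; assumption A1 is not needed for this lemma, only for the posterior LR step in the theorem. The one point to check is the distributional identification: the same $Z_k$ sequence is shared across the two sums, and scaling by $\mu_{t+1}r$ versus $\mu_{t+1}r'$ produces the correct severity laws. Both follow directly from the scale-family definition.
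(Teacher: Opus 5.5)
Your proposal is correct and is exactly the argument the paper intends: the paper omits this proof as ``similar to that of Lemma~\ref{lem.2},'' and your coupling is precisely that adaptation, with a shared i.i.d.\ sequence $Z_k\sim g$, Poisson superposition $N+N^*$, and scaling by $\mu_{t+1}r$ versus $\mu_{t+1}r'$. You are also right that assumption A1 plays no role in this lemma.
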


\end{document}